\title{A linear-time algorithm for the geodesic center of a simple polygon}
\author{
Hee-Kap Ahn\thanks{Department of Computer Science and Engineering, POSTECH, Korea.\texttt{\{heekap, jin9082\}@postech.ac.kr}. Supported by the NRF grant 2011-0030044 (SRC-GAIA) funded by the Korea government (MSIP).}
\and Luis Barba\thanks{School of Computer Science, Carleton University, Ottawa, Canada. \texttt{jit@scs.carleton.ca, jdecaruf@cg.scs.carleton.ca}} $^,$\thanks{D\'epartement d'Informatique, Universit\'e Libre de Bruxelles, Brussels, Belgium. \texttt{lbarbafl@ulb.ac.be}}
 \and Prosenjit Bose\footnotemark[2] 
 \and Jean-Lou de Carufel\footnotemark[2] 
 \and Matias Korman\thanks{National Institute of Informatics (NII), Tokyo, Japan.\texttt{korman@nii.ac.jp}} $^,$\thanks{JST, ERATO, Kawarabayashi Large Graph Project} 
 \and Eunjin Oh\footnotemark[1]
 }
\newtheorem{theorem}{Theorem}[section]
\newtheorem{lemma}[theorem]{Lemma}
\newtheorem{corollary}[theorem]{Corollary}
\newcommand{\F}[2]{\ensuremath{F_{\scriptscriptstyle #1}(#2)}}
\newcommand{\f}[2]{\ensuremath{f_{\scriptscriptstyle #1}(#2)}}
\newcommand{\fn}[2]{\ensuremath{S_{\scriptscriptstyle #1}(#2)}}
\newcommand{\ff}[1]{\ensuremath{f(#1)}}
\newcommand{\cp}{\ensuremath{c_P}}
\newcommand{\m}{\ensuremath{m_{\scriptscriptstyle R}}}
\newcommand{\g}[2]{\ensuremath{|\pi(#1, #2)|}}
\newcommand{\p}[2]{\ensuremath{\pi(#1, #2)}}
\newcommand{\reg}{\ensuremath{R'}}
\newcommand{\tcell}{4-cell\xspace}
\newcommand{\tcells}{4-cells\xspace}
\newcommand{\luis}[2][says]{}
\newcommand{\mati}[2][says]{}
\begin{document}

\maketitle

\begin{abstract}
Given two points in a simple polygon $P$ of $n$ vertices, its geodesic distance is the length of the shortest path that connects them among all paths that stay within $P$. The  geodesic center of $P$ is the unique point in $P$ that minimizes the largest geodesic distance to all other points of $P$. In 1989, Pollack, Sharir and Rote [Disc. \& Comput. Geom. 89] showed an $O(n\log n)$-time algorithm that computes the geodesic center of $P$. Since then, a longstanding question has been whether this running time can be improved (explicitly posed by Mitchell [Handbook of Computational Geometry, 2000]). 
In this paper we affirmatively answer this question and present a linear time algorithm to solve this problem.
\end{abstract}


\section{Introduction}
Let $P$ be a simple polygon with $n$ vertices.
Given two points $x,y$ in $P$, the \emph{geodesic path} $\p{x}{y}$ is the shortest-path contained in $P$ connecting $x$ with $y$. If the straight-line segment connecting $x$ with $y$ is contained in $P$, then $\p{x}{y}$ is a straight-line segment. Otherwise, $\p{x}{y}$ is a polygonal chain whose vertices (other than its endpoints) are  reflex vertices of $P$. We refer the reader to~\cite{m-gspno-00} for more information on geodesic paths refer.

The \emph{geodesic distance} between $x$ and $y$, denoted by $\g{x}{y}$, is the sum of the Euclidean lengths of each segment in $\p{x}{y}$. Throughout this paper, when referring to the distance between two points in $P$, we refer to the geodesic distance between them. Given a point $x\in P$, a (geodesic) \emph{farthest neighbor} of $x$, is a point $\f{P}{x}$ (or simply $\ff{x}$) of $P$ whose geodesic distance to $x$ is maximized. 
To ease the description, we assume that each vertex of $P$ has a unique farthest neighbor. 
We can make this \emph{general position} assumption using simulation of simplicity~\cite{edelsbrunner1990simulation}.

Let $\F{P}{x}$ be the function that, for each $x\in P$, maps to the distance to a farthest neighbor of $x$ (i.e., $\F{P}{x} = \g{x}{\ff{x}}$).
A point $\cp\in P$ that minimizes $\F{P}{x}$ is called the \emph{geodesic center} of $P$. Similarly, a point $s\in P$ that maximizes $\F{P}{x}$ (together with its farthest neighbor) is called a \emph{geodesic diametral pair} and their distance is known as the \emph{geodesic diameter}. Asano and Toussaint~\cite{at-cgcsp-85} showed that the geodesic center is unique (whereas it is easy to see that several geodesic diametral pairs may exist).

In this paper, we show how to compute the geodesic center of $P$ in $O(n)$ time.

\subsection{Previous Work}
Since the early 1980s the problem of computing the geodesic center (and its counterpart, the geodesic diameter) has received a lot of attention from the computational geometry community. Chazelle~\cite{c-tpca-82} gave the first algorithm for computing the geodesic diameter (which runs in $O(n^2)$ time using linear space). Afterwards, Suri~\cite{suri1989computing} reduced it to $O(n\log n)$-time without increasing the space constraints. Finally, Hershberger and Suri~\cite{hershberger1993matrix} presented a fast matrix search technique, one application of which is a linear-time algorithm for computing the diameter.

The first algorithm for computing the geodesic center was given by Asano and Toussaint~\cite{at-cgcsp-85}, and runs in $O(n^4\log n)$-time. In 1989, Pollack, Sharir, and Rote~\cite{pollackComputingCenter} improved it to $O(n\log n)$ time. Since then, it has been an open problem whether the geodesic center can be
computed in linear time (indeed, this problem was explicitly posed by Mitchell~\cite[Chapter 27]{m-gspno-00}).

Several other variations of these two problems have been considered. Indeed, the same problem has been studied under different metrics. Namely, the $L_1$ geodesic distance~\cite{bkow-clgdcsplt-13},  the link distance~\cite{suri-mlpprp-87,k-ealdp-89,dls-aclcsp-92} (where we look for the path with the minimum possible number of bends or {\em links}), or even rectilinear link distance~\cite{ns-crldp-91,ns-oarlcrp-96} (a variation of the link distance in which only isothetic segments are allowed). The diameter and center of a simple polygon for both the $L_1$ and rectilinear link metrics can be computed in linear time (whereas $O(n\log n)$ time is needed for the link distance).

Another natural extension is the computation of the diameter and center in polygonal domains (i.e., polygons with one or more holes). Polynomial time algorithms are known for both the diameter~\cite{bko-gdpd-13} and center~\cite{bko-cgcpd-14}, although the running times are significantly larger (i.e., $O(n^{7.73})$ and $O(n^{12+\varepsilon})$, respectively).

\subsection{Outline}
In order to compute the geodesic center, Pollack {\em et al.}~\cite{pollackComputingCenter} introduce a linear time \emph{chord-oracle}. Given a chord $C$ that splits $P$ into two sub-polygons, the oracle determines which sub-polygon contains $\cp$. Combining this operation with an efficient search on a triangulation of $P$, Pollack {\em et al.} narrow the search of $\cp$ within a triangle (and find the center using optimization techniques). Their approach however, does not allow them to reduce the complexity of the problem in each iteration, and hence it runs in $\Theta(n\log n)$ time. 

The general approach of our algorithm described in Section~\ref{section:Prune and search} is similar: 
partition $P$ into $O(1)$ cells, use an oracle to determine which cell contains $\cp$, and recurse within the cell. 
Our approach differs however in two important aspects that allows us to speed-up the algorithm. 
First, we do not use the chords of a triangulation of $P$ to partition the problem into cells. 
We use instead a cutting of a suitable set of chords.
Secondly, we compute a set $\Phi$ of $O(n)$ functions, each defined in a triangular domain contained in $P$, such that their upper envelope, $\phi(x)$, coincides with $\F{P}{x}$. 
Thus, we can ``ignore'' the polygon $P$ and focus only on finding the minimum of the function $\phi(x)$. 

The search itself uses $\varepsilon$-nets and cutting techniques,
which certify that both the size of the cell containing $\cp$ and the number of functions of $\Phi$ defined in it decrease by a constant fraction 
(and thus leads to an overall linear time algorithm).
This search has however two stopping conditions, (1) reach a subproblem of constant size, 
or (2) find a triangle containing $\cp$.
In the latter case, 
we show that $\phi(x)$ is a convex function when restricted to this triangle. 
Thus, finding its minimum becomes an optimization problem that we solve in Section~\ref{Section:Solving convex optimization poblem} using cuttings in $\mathbb{R}^3$.

The key of this approach lies in the computation of the functions of $\Phi$ and their triangular domains. 
Each function $g(x)$ of $\Phi$ is defined in  a triangular domain $\triangle$ contained in $P$ and is associated to a particular vertex $w$ of $P$. 
Intuitively speaking, $g(x)$ maps points in $\triangle$ to their (geodesic) distance to $w$.
We guarantee that, for each point $x\in P$, there is one function $g$ defined in a triangle containing $x$, such that $g(x) = \F{P}{x}$.
To compute these triangles and their corresponding functions, we proceed as follows.

In Section~\ref{Section:Decomposing the boundary}, we use the matrix search technique introduced by Hershberger and Suri~\cite{hershberger1993matrix} to decompose the boundary of $P$, denoted by $\partial P$, into connected edge disjoint chains.
Each chain is defined by either $(1)$ a consecutive list of vertices that have the same farthest neighbor $v$ (we say that $v$ is \emph{marked} if it has such a chain associated to it), or $(2)$ an edge whose endpoints have different farthest neighbors (such edge is called a \emph{transition edge}).

In Section~\ref{Section: Building hourglasses}, we consider each transition edge $ab$ of $\partial P$ independently and compute its \emph{hourglass}. Intuitively, the hourglass of $ab$, $H_{ab}$, is the region of $P$ between two chains, the edge $ab$ and the chain of $\partial P$ that contains the farthest neighbors of all points in $ab$.
Inspired by a result of Suri~\cite{suri1989computing}, we show that the sum of the complexities of each hourglass defined on a transition edge is $O(n)$. 
In addition, we provide a new technique to compute all these hourglasses in linear time.

In Section~\ref{Section:Computing apexed triangles} we show how to compute the functions in $\Phi$ and their respective triangles.
We distinguish two cases: (1) Inside each hourglass $H_{ab}$ of a transition edge, we use a technique introduced by Aronov et al.~\cite{aronov1993furthest} that uses the shortest-path trees of $a$ and $b$ in $H_{ab}$ to decompose $H_{ab}$ into $O(|H_{ab}|)$ triangles with their respective functions (for more information on shortest-path trees refer to~\cite{guibasShortestPathTree}). 
(2) For each marked vertex $v$ we compute triangles that encode the distance from $v$. Moreover, we guarantee that these triangles cover every point of $P$ whose farthest neighbor is $v$. 
Overall, we compute the $O(n)$ functions of $\Phi$ in linear time.

\section{Hourglasses and Funnels}
In this section, we introduce the main tools that are going to be used by the algorithm. Some of the results presented in this section have been shown before in different papers. 

\subsection{Hourglasses}

Given two points $x$ and $y$ on $\partial P$, let $\partial P(x,y)$ be the polygonal chain that starts at $x$ and follows the boundary of $P$ clockwise until reaching $y$.

For any polygonal chain $C= \partial P(p_0, p_1, \ldots, p_k)$, the \emph{hourglass} of $C$, denoted by $H_C$, is the simple polygon contained in $P$ bounded by $C$, $\p{p_k}{\ff{p_0}}$, $\partial P(\ff{p_0}, \ff{p_k})$ and $\p{\ff{p_k}}{ p_0}$; see Figure~\ref{fig:Transition chains and hourglasses}. 
We call $C$ and $\partial P(\ff{p_0}, \ff{p_k})$ the \emph{top} and \emph{bottom} chains of $H_C$, respectively, while $\p{p_k}{ \ff{p_0}}$ and $\p{\ff{p_k}}{p_0}$ are referred to as the \emph{walls} of $H_C$.

\begin{figure}[tb]
\centering
\includegraphics{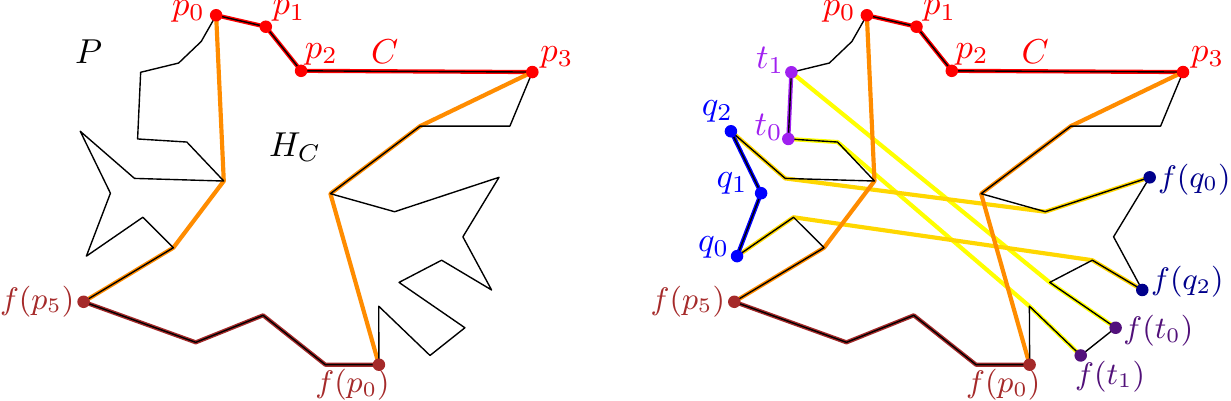}
\caption{\small Given two edge disjoint transition chains, their hourglasses are open and
the bottom chains of their hourglasses are also edge disjoint. 
Moreover, these bottom chains appear in the same cyclic order as the top chains along $\partial P$.}
\label{fig:Transition chains and hourglasses}
\end{figure}

We say that the hourglass $H_C$ is \emph{open} if its walls are vertex disjoint. 
We say $C$ is a \emph{transition chain} if $\ff{p_0} \neq \ff{p_k}$ and neither $\ff{p_0}$ nor $\ff{p_k}$ are interior vertices of $C$. In particular, if an edge $ab$ of $\partial P$ is a transition chain, we say that it is a \emph{transition edge} (see Figure~\ref{fig:Transition chains and hourglasses}).

\begin{lemma}\label{lemma:Transition hourglasses are open}
[Rephrase of Lemma~3.1.3 of~\cite{aronov1993furthest}] 
If $C$ is a transition chain of $\partial P$, then the hourglass $H_C$ is an open hourglass.
\end{lemma}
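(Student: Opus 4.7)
The plan is to argue by contradiction: suppose the two walls $\p{p_k}{\ff{p_0}}$ and $\p{\ff{p_k}}{p_0}$ share some vertex $v$. The goal is to use $v$ to force $\ff{p_0}=\ff{p_k}$, contradicting the defining property of a transition chain.

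The key tool is a basic fact about geodesics in a simple polygon: shortest paths are unique, and if a point $v$ lies on the shortest path $\p{x}{y}$, then $\p{x}{y}=\p{x}{v}\cup\p{v}{y}$ and $\g{x}{y}=\g{x}{v}+\g{v}{y}$; for arbitrary triples one still has the geodesic triangle inequality $\g{x}{y}\le \g{x}{v}+\g{v}{y}$. I would apply the first identity to each of the two walls through $v$, and the triangle inequality to the ``diagonals'' $\p{p_0}{\ff{p_0}}$ and $\p{p_k}{\ff{p_k}}$. Summing the four relations and cancelling the common $\g{p_0}{v}$, $\g{p_k}{v}$, $\g{v}{\ff{p_0}}$, $\g{v}{\ff{p_k}}$ terms would yield
\[
\g{p_k}{\ff{p_0}}+\g{p_0}{\ff{p_k}} \;\ge\; \g{p_0}{\ff{p_0}}+\g{p_k}{\ff{p_k}}.
\]

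On the other hand, the defining property of a farthest neighbor gives $\g{p_0}{\ff{p_0}}\ge\g{p_0}{\ff{p_k}}$ and $\g{p_k}{\ff{p_k}}\ge\g{p_k}{\ff{p_0}}$, whose sum is exactly the reverse inequality. Hence equality must hold throughout, and in particular $\g{p_0}{\ff{p_0}}=\g{p_0}{\ff{p_k}}$. By the general position assumption that each vertex of $P$ has a unique farthest neighbor, this forces $\ff{p_k}=\ff{p_0}$, contradicting the transition chain hypothesis $\ff{p_0}\ne\ff{p_k}$.

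The main obstacle I anticipate is handling the degenerate situations in which the ``shared vertex'' $v$ actually coincides with one of $p_0$, $p_k$, $\ff{p_0}$, $\ff{p_k}$, or in which an endpoint of one wall lies in the interior of the other wall. In each such case the identity $\g{x}{y}=\g{x}{v}+\g{v}{y}$ either remains valid or degenerates harmlessly with a zero-length piece, and the combined inequality above continues to hold verbatim, so the same contradiction is reached. The hypothesis that neither $\ff{p_0}$ nor $\ff{p_k}$ is an interior vertex of $C$ is implicitly used to ensure that the top chain and the walls are geometrically distinct objects, so that ``vertex-disjoint walls'' is the only remaining condition to verify.
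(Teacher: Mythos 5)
The paper does not prove this lemma; it is stated as a rephrase of Lemma~3.1.3 of Aronov, Fortune, and Wilfong and simply cited, so there is no in-paper argument to compare against directly. Your self-contained proof is correct. It is a clean quadrangle-inequality argument: if the two walls $\p{p_k}{\ff{p_0}}$ and $\p{\ff{p_k}}{p_0}$ meet at a point $v$, additivity of geodesic distance along a shortest path gives
\[
\g{p_k}{\ff{p_0}} = \g{p_k}{v} + \g{v}{\ff{p_0}}, \qquad
\g{p_0}{\ff{p_k}} = \g{p_0}{v} + \g{v}{\ff{p_k}},
\]
while the geodesic triangle inequality through $v$ bounds $\g{p_0}{\ff{p_0}}$ and $\g{p_k}{\ff{p_k}}$ by sums with the same four terms; adding and cancelling yields $\g{p_k}{\ff{p_0}}+\g{p_0}{\ff{p_k}} \ge \g{p_0}{\ff{p_0}}+\g{p_k}{\ff{p_k}}$. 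The farthest-neighbor property gives the reverse inequality termwise, so everything is tight and $\g{p_0}{\ff{p_0}}=\g{p_0}{\ff{p_k}}$. The paper's general-position assumption (each vertex has a unique farthest neighbor) then forces $\ff{p_0}=\ff{p_k}$, contradicting the transition-chain hypothesis. Your handling of the degenerate placements of $v$ is also fine, since each equality or inequality degenerates benignly. Two small remarks. First, your proof nowhere uses the part of the transition-chain definition stating that $\ff{p_0}, \ff{p_k}$ are not interior vertices of $C$; that clause is really there to guarantee the hourglass is a well-defined simple polygon (the cyclic order of $p_0, p_k, \ff{p_0}, \ff{p_k}$ along $\partial P$), not to make the walls disjoint, so your characterization of its role is slightly off but immaterial. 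Second, your argument leans on general position to get a strict contradiction; without that assumption one would only conclude that $\ff{p_k}$ is \emph{also} a farthest neighbor of $p_0$, which is weaker than what Aronov et al.\ prove. Since this paper explicitly adopts the general-position assumption, your proof is valid in context, and it has the pedagogical advantage of being short, elementary, and self-contained rather than a citation.
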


Note that by Lemma~\ref{lemma:Transition hourglasses are open}, the hourglass of each transition chain is open.
In the remainder of the paper, all the hourglasses considered are defined by a transition chain, i.e., they are open and their top and bottom chains are edge disjoint.

The following lemma is depicted in Figure~\ref{fig:Transition chains and hourglasses} and is a direct consequence of the Ordering Lemma proved by Aronov et al.~\cite[Corollary 2.7.4]{aronov1993furthest}.
\begin{lemma}\label{lemma:Ordering Lemma}
Let $C_1, C_2, C_3$ be three edge disjoint transition chains of $\partial P$ that appear in this order when traversing clockwise the boundary of $P$. Then, the bottom chains of $H_{C_1}, H_{C_2}$ and $H_{C_3}$ are also edge disjoint and appear in this order when traversing clockwise the boundary of~$P$.
\end{lemma}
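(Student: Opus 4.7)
The plan is to reduce the claim to the non-crossing property of geodesic paths: two shortest paths in a simple polygon whose four endpoints all lie on $\partial P$ cannot cross transversally, so whenever their two endpoint pairs interleave on $\partial P$ they must coincide along a contiguous (possibly single-point) subpath. Write $C_i = \partial P(a_i, b_i)$, denote the walls of $H_{C_i}$ by $W_i^L = \p{f(b_i)}{a_i}$ and $W_i^R = \p{b_i}{f(a_i)}$, and the bottom chain by $B_i = \partial P(f(a_i), f(b_i))$. Because each $C_i$ is a transition chain we have $f(a_i) \neq f(b_i)$ with neither lying in the interior of $C_i$, and by Lemma~\ref{lemma:Transition hourglasses are open} the walls of each $H_{C_i}$ are vertex-disjoint; hence $H_{C_i}$ is a topological disk whose boundary meets $\partial P$ exactly in $C_i \cup B_i$.

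For the cyclic ordering I would argue by contradiction. Assume that $B_1, B_2, B_3$ do not appear in the same clockwise cyclic order as $C_1, C_2, C_3$. A short case analysis shows that any such mismatch forces a wall of some $H_{C_i}$ and a wall of some $H_{C_j}$ (with $i \neq j$) to have endpoints that interleave along $\partial P$; for instance, if going clockwise one sees the arcs in the order $C_1, C_2, C_3, B_1, B_3, B_2$, then the endpoints of $W_2^R = \p{b_2}{f(a_2)}$ and $W_3^L = \p{f(b_3)}{a_3}$ interleave on $\partial P$. The non-crossing property then forces these two geodesics to share a common contiguous subpath. Using the openness of both $H_{C_i}$ and $H_{C_j}$, I would trace the endpoints of that shared subpath to conclude that either the two hourglasses overlap in a way that violates openness of one of them, or $C_i$ and $C_j$ share a vertex whose farthest neighbor belongs to two distinct bottom chains, contradicting the transition chain definition.

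The edge-disjointness of the bottom chains follows from the same mechanism: if $B_i$ and $B_j$ shared an edge, then the endpoints of the corresponding walls would either coincide or interleave on $\partial P$, and non-crossing combined with openness would again force two of the top chains to overlap or force two walls to merge into a single non-open hourglass, contradicting Lemma~\ref{lemma:Transition hourglasses are open}. The main obstacle is the bookkeeping of this interleaving argument: geodesics may share subpaths without crossing transversally, so the contradiction is not immediate from the interleaving of endpoints on $\partial P$; the openness granted by Lemma~\ref{lemma:Transition hourglasses are open} is what localizes each shared subpath and turns it into a concrete contradiction. This is precisely the manipulation carried out in the proof of Corollary~2.7.4 of Aronov et al.~\cite{aronov1993furthest}, and the plan is to identify the walls above with the non-crossing geodesic pairs in their argument so as to invoke that corollary directly.
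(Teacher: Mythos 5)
Your proposal matches the paper's proof in substance: the paper gives no standalone argument for this lemma and simply declares it a direct consequence of the Ordering Lemma of Aronov et al.\ [Corollary 2.7.4], which is exactly the result you invoke at the end. Your sketch of the underlying non-crossing argument is a fair reconstruction of what that corollary's proof does, but one concrete detail is off. In your example cyclic order $C_1, C_2, C_3, B_1, B_3, B_2$, the walls $W_2^R = \p{b_2}{\ff{a_2}}$ and $W_3^L = \p{\ff{b_3}}{a_3}$ have endpoints appearing clockwise in the order $b_2,\, a_3,\, \ff{b_3},\, \ff{a_2}$, i.e.\ the nested pattern $ABBA$ rather than the interleaved $ABAB$, so those two walls need not cross at all. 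A pair that does interleave in that configuration is, for instance, $W_1^R = \p{b_1}{\ff{a_1}}$ and $W_2^R = \p{b_2}{\ff{a_2}}$, whose endpoints appear clockwise as $b_1,\, b_2,\, \ff{a_1},\, \ff{a_2}$. This slip does not affect the validity of the citation-based proof, but if you intend to reconstruct Aronov et al.'s argument rather than cite it, you would need both to choose the interleaving wall pair more carefully in the case analysis and, as you already note, to handle the possibility that two interleaving geodesics merge along a shared subpath rather than crossing transversally.
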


Let $\gamma$ be a  geodesic path joining two points on the boundary of $P$.
We say that $\gamma$ \emph{separates} two points $x_1$ and $x_2$ of $\partial P$ if the points of $X=\{x_1, x_2\}$ and the endpoints of $\gamma$ alternate along the boundary of $P$ ($x_1$ and $x_2$ could coincide with the endpoints of $\gamma$ in degenerate cases). We say that a geodesic path $\gamma$ \emph{separates} an hourglass $H$ if it separates the points of its top chain from those of its  bottom chain.

\begin{lemma}\label{lemma:Split paths}
Let $C_1, \ldots, C_r$ be edge disjoint transition chains of $\partial P$. Then, there is a set of $t = O(1)$ geodesic paths $\gamma_1, \ldots, \gamma_t$ with endpoints on $\partial P$ such that for each $1\leq i\leq r$ there exists $1\leq j\leq t$ such that $\gamma_j$ separates $H_{C_i}$.
Moreover, this set can be computed in $O(n)$ time.
\end{lemma}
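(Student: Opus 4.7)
The plan is to leverage the cyclic ordering from Lemma~\ref{lemma:Ordering Lemma} to reduce the problem to a combinatorial separation question on $\partial P$, and then to construct $O(1)$ separating geodesic paths explicitly.

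First, for each $i$, let $a_i, b_i$ denote the endpoints of $C_i$ and let $B_i = \partial P(f(a_i), f(b_i))$ be the bottom chain of $H_{C_i}$; these $2r$ arcs are pairwise edge-disjoint. Unwinding the definition of ``separates'' given before the statement, a geodesic path $\gamma$ with endpoints on $\partial P$ separates $H_{C_i}$ exactly when its two endpoints lie in the two \emph{gap arcs} of $\partial P$ complementary to $C_i \cup B_i$, so that $C_i$ and $B_i$ end up in distinct sub-arcs cut out by the endpoints of $\gamma$. By Lemma~\ref{lemma:Ordering Lemma}, the sequences $C_1, \ldots, C_r$ and $B_1, \ldots, B_r$ appear in the same clockwise cyclic order, so the cyclic arrangement of all $2r$ arcs along $\partial P$ is the monotone ``merge'' of two cyclic sequences sharing a common orientation.

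Second, I would exploit this monotone structure to pick a constant number of witness pairs of points on $\partial P$. The idea is that, after collapsing maximal blocks of consecutive $C$'s and consecutive $B$'s, the cyclic boundary is partitioned into only a handful of alternating ``runs'' of $C$'s and $B$'s. For each pair of opposite runs, a single pair of endpoints (one point in each of two well-chosen cyclic gaps) simultaneously separates every hourglass whose top arc lies in the first run and whose bottom arc lies in the matching opposite run; the monotonicity from Lemma~\ref{lemma:Ordering Lemma} ensures that ``matching'' here is well-defined. A careful case analysis on the interleaving patterns of the merge then bounds the total number of witness pairs needed by a small constant.

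Third, for each chosen pair of endpoints, the corresponding geodesic path inside $P$ is computed in $O(n)$ time via a shortest-path tree in a simple polygon~\cite{guibasShortestPathTree}; since only $O(1)$ pairs are needed, the overall running time is $O(n)$. The main obstacle in the argument is the combinatorial step: showing that the monotone cyclic merge of the two sequences can always be covered by $O(1)$ stabbing pairs, regardless of the interleaving pattern. This is exactly where the common cyclic order of the $C_i$'s and $B_i$'s supplied by Lemma~\ref{lemma:Ordering Lemma} is essential, since for an arbitrary pairing (without a shared order) $\Theta(r)$ separating paths could be forced in the worst case.
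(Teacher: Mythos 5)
Your reduction of ``separates'' to a statement about the endpoints of $\gamma$ landing in the two complementary gap arcs is fine, and your observation that the Ordering Lemma aligns the cyclic orders of the $C_i$'s and the bottom chains $B_i$ is also correct. But the key combinatorial claim in your second step --- that the monotone merge of the two cyclically ordered families $C_1,\ldots,C_r$ and $B_1,\ldots,B_r$ collapses into only $O(1)$ alternating runs of $C$'s and $B$'s --- does not follow from the Ordering Lemma and is in fact false as a purely combinatorial statement. The merge of two same-orientation cyclic sequences of length $r$ can perfectly well alternate as $C_1,B_1,C_2,B_2,\ldots,C_r,B_r$, giving $2r$ runs; nothing in Lemma~\ref{lemma:Ordering Lemma} forbids this. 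In that configuration no $O(1)$ family of chords can stab all the hourglasses, so the gap you flag at the end of your write-up (``the main obstacle'') is not a gap you can close with the tools you have invoked.

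What actually makes the lemma true is a \emph{geometric} fact about farthest neighbors in a simple polygon, namely Lemma~2.7.6 of Aronov, Fortune, and Wilfong: there are four vertices $v_1,\ldots,v_4$ of $P$, computable in $O(n)$ time, such that the three geodesic paths $\pi(v_1,v_2),\pi(v_2,v_3),\pi(v_3,v_4)$ already separate every boundary point $x$ from its farthest neighbor $\ff{x}$. This is precisely the ingredient that rules out the alternating pattern above, and the paper's proof is built on it: take $\Gamma=\{\pi(v_i,v_j):1\le i<j\le 4\}$; every transition chain $C_i$ that lies entirely inside one of the four arcs cut out by $v_1,\ldots,v_4$ has $H_{C_i}$ separated by some member of $\Gamma$, and there can be at most four exceptional chains $C_i$ that contain one of the $v_j$ in their interior (since the $C_i$ are edge-disjoint), for which one adds the geodesic path joining the endpoints of $C_i$. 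Your proposal never brings in this farthest-point structure, so as written it is missing the essential idea; the Ordering Lemma alone cannot carry the argument.
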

\begin{proof}
Aronov et al. showed that there exist four vertices $v_1, \ldots, v_4$ of $P$ and geodesic paths $\p{v_1}{v_2}, \p{v_2}{v_3}, \p{v_3}{v_4}$ such that for any point $x\in \partial P$, one of these paths separates $x$ from $\ff{x}$~\cite[Lemma 2.7.6]{aronov1993furthest}. Moreover, they show how to compute this set in $O(n)$ time.

Let $\Gamma= \{\p{v_i}{v_j} : 1\leq i < j\leq 4\}$ and note that $v_1, \ldots, v_4$ split the boundary of $P$ into at most four connected components.
If a chain $C_i$ is completely contained in one of these components, then one path of $\Gamma$ separates the top and bottom chain of $H_{C_i}$. Otherwise, some vertex $v_j$ is an interior vertex of $C_i$. However, because the chains $C_1, \ldots, C_r$ are edge disjoint, there are at most four chains in this situation. 
For each chain $C_i$ containing a vertex $v_j$, we add the geodesic path connecting the endpoints of $C_i$ to $\Gamma$.
Therefore, $\Gamma$ consists of $O(1)$ geodesic paths and each hourglass $H_{C_i}$ has its top and bottom chain separated by some path of $\Gamma$.
Since only $O(1)$ additional paths are computed, this can be done in linear time.
\end{proof}

A \emph{chord} of $P$ is an edge joining two non-adjacent vertices $a$ and $b$ of $P$ such that $ab\subseteq P$. Therefore, a chord splits $P$ into two sub-polygons.

\begin{lemma}\label{lemma:Edges appear a constant number of times}
[Rephrase of Lemma 3.4.3 of~\cite{aronov1993furthest}]
Let $C_1, \ldots, C_r$ be a set of edge disjoint transition chains of $\partial P$ that appear in this order when traversing clockwise the boundary of $P$. Then each chord of $P$ appears in $O(1)$ hourglasses among $H_{C_1}, \ldots, H_{C_r}$.
\end{lemma}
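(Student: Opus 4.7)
The plan is to bound the number of indices $i$ with $ab \subseteq H_{C_i}$ by a case analysis on where the chord endpoints $a,b$ lie on $\partial H_{C_i}$. Since $a,b$ are vertices of $P$ (hence on $\partial P$) and $H_{C_i} \subseteq P$, whenever $ab \subseteq H_{C_i}$ both points must lie on $\partial H_{C_i} = T_i \cup W_1^{(i)} \cup B_i \cup W_2^{(i)}$, where $T_i = C_i$ is the top chain, $B_i$ the bottom chain, and $W_1^{(i)}, W_2^{(i)}$ are the two walls.

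The easy case is $a,b \in T_i \cup B_i$. Since $C_1, \ldots, C_r$ are edge-disjoint by hypothesis and the corresponding bottom chains are edge-disjoint by Lemma~\ref{lemma:Ordering Lemma}, each vertex of $P$ lies in $O(1)$ of these chains (at most one as an interior vertex of a chain of each type, and at most two as a shared endpoint of each type). Hence only $O(1)$ indices $i$ satisfy this condition, and this case contributes $O(1)$ hourglasses.

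For the remaining case in which at least one of $a,b$ lies in the relative interior of a wall, I would apply Lemma~\ref{lemma:Split paths} to partition the hourglasses into $O(1)$ families, each separated by a common geodesic path $\gamma_j$. Within a family the Ordering Lemma forces the top and bottom chains into matching cyclic order on the two arcs of $\partial P$ induced by the endpoints of $\gamma_j$, and this in turn forces the walls to form a pairwise non-crossing system of geodesic paths each crossing $\gamma_j$. The condition $ab \subseteq H_{C_i}$ requires that the walls of $H_{C_i}$ do not transversely cross the interior of $ab$; using the linear order in which the walls cross $\gamma_j$, I would show that only $O(1)$ hourglasses per family can admit the chord, giving $O(1)$ indices in total after summing over the $O(1)$ families.

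\textbf{Main obstacle.} The hardest step is the counting within a family in the second case. A reflex vertex of $P$ can lie on many distinct geodesic paths, and the hourglasses in a family need not be nested, so neither a direct angular argument at $a$ nor a simple nesting argument suffices. The correct approach must combine the matching cyclic order of chain endpoints given by Lemma~\ref{lemma:Ordering Lemma} with the non-crossing property of walls, via an elementary combinatorial argument that identifies, from any sufficiently large set of hourglasses containing $ab$, a triple of walls whose ordering along $\gamma_j$ contradicts the chord lying simultaneously inside all three hourglasses.
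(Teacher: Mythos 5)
You are proving a different, strictly stronger statement than the paper does. In this paper, a chord ``appears in'' a hourglass means it appears as a \emph{boundary edge} of that hourglass --- necessarily on one of the walls, since the top and bottom chains are arcs of $\partial P$ and their edges are polygon edges, not chords. This reading is forced by the paper's own proof (``chords can only appear on walls of hourglasses \dots any chord must be an edge on exactly one wall'') and by the only place the lemma is used, Lemma~\ref{lemma:Bounding complexity of hourglasses}, which charges each \emph{wall edge} $O(1)$ times in order to bound $\sum_i |H_{C_i}|$. Your Case~1/Case~2 split instead counts the indices $i$ with $ab \subseteq H_{C_i}$, i.e.\ with the chord merely contained in the hourglass. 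That containment count is not what is needed, and even if it admits an $O(1)$ bound, it is a harder thing to prove (many hourglasses can overlap a fixed interior segment without any of them having it as a wall edge), which is exactly why you run into trouble.

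And indeed the hard part is not carried out: Case~2, which you yourself flag as the main obstacle, is only a plan (``I would show \dots''), invoking Lemma~\ref{lemma:Split paths} and an unspecified ``elementary combinatorial argument.'' The paper's proof is shorter, does not use Lemma~\ref{lemma:Split paths} at all, and works directly with the Ordering Lemma. Fix an orientation: say $s$ is visited before $t$ when each wall is walked from top chain to bottom chain. Suppose for contradiction $st$ is such a wall edge of $H_{C_i}, H_{C_j}, H_{C_k}$ with $i<j<k$, and let $\p{s_i}{t_i}$, $\p{s_k}{t_k}$ be the corresponding walls of the outer two. By Lemma~\ref{lemma:Ordering Lemma}, $C_j$ lies between $s_i$ and $s_k$ and the bottom chain of $H_{C_j}$ lies between $t_i$ and $t_k$; hence every geodesic from a point of $C_j$ to a point of the bottom chain of $H_{C_j}$ is sandwiched between $\p{s_i}{t_i}$ and $\p{s_k}{t_k}$, and so passes through the edge $st$. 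In particular both walls of $H_{C_j}$ contain $s$ and $t$, so they are not vertex disjoint, contradicting that $H_{C_j}$ is open (Lemma~\ref{lemma:Transition hourglasses are open}). The reverse orientation is symmetric, giving the $O(1)$ bound. Your outline never reaches this sandwich/openness contradiction, so as written there is a genuine gap.
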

\begin{proof}
Note that chords can only appear on walls of hourglasses. Because  hourglasses are open, any chord must be an edge on exactly one wall of each of these hourglasses. Assume, for the sake of contradiction, that there exists two points $s,t\in P$ whose chord $st$ is in three hourglasses $H_{C_i}, H_{C_j}$ and $H_{C_k}$ (for some $1\leq i < j < k\leq r$) such that $s$ visited before $t$ when going from the top chain to the bottom one along the walls of the three hourglasses.
Let $s_i$ and $t_i$ be the points in the in the top and bottom chains of $H_{C_i}$, respectively, such that $\p{s_i}{t_i}$ is the wall of $H_{C_i}$ that contains $st$ (analogously, we define $s_k$ and $t_k$)

Because  $C_j$ lies in between $C_i$ and $C_k$, Lemma~\ref{lemma:Ordering Lemma} implies that the bottom chain of $C_j$ appears between the bottom chains of $C_i$ and $C_k$. Therefore, $C_j$ lies between $s_i$ and $s_k$ and the bottom chain of $H_{C_j}$ lies between $t_i$ and $t_k$. 
That is, for each $x\in C_j$ and each $y$ in the bottom chain of $H_{C_j}$, the geodesic path $\p{x}{y}$ is ``sandwiched'' by the paths $\p{s_i}{t_i}$ and $\p{s_k}{t_k}$.
In particular, $\p{x}{y}$ contains $st$ for each pair of points in the top and bottom chain of $H_{C_j}$.
However, this implies that the hourglass $H_{C_j}$ is not open---a contradiction that comes from assuming that $st$ lies in the  wall of three open hourglasses, when this wall is traversed from the top chain to the bottom chain. 
Analogous arguments can be used to bound the total number of walls that contain the edge $st$ (when traversed in any direction) to $O(1)$.
\end{proof}


\begin{lemma}\label{lemma:Suri's lemma}
Let $x, u, y, v$ be four vertices of $P$ that appear in this cyclic order in a clockwise traversal of $\partial P$.
Given the shortest-path trees $T_x$ and $T_y$ of $x$ and $y$ in $P$, respectively, such that $T_x$ and $T_y$ can answer lowest common ancestor (LCA) queries in $O(1)$ time, 
we can compute the path $\p{u}{v}$ in $O(|\p{u}{v}|)$ time. 
Moreover, all edges of $\p{u}{v}$, except perhaps one, belong to $T_x\cup T_y$.
\end{lemma}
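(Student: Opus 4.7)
The plan is to show that $\p{u}{v}$ admits a clean three-part decomposition---walk up in $T_y$ from $u$, traverse a common sub-path of $\p{x}{y}$, then walk down in $T_x$ to $v$---that accounts for all of its edges save at most one, and then to exploit this structure algorithmically.

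First I would use the cyclic order. Because $x,u,y,v$ appear clockwise on $\partial P$, the geodesic $\p{x}{y}$ together with $\partial P$ divides $P$ into two sub-polygons $R_u\ni u$ and $R_v\ni v$. Hence $\p{u}{v}$ must cross $\p{x}{y}$. Since two geodesics in a simple polygon meet in a connected sub-path, I may write $\p{u}{v}=\p{u}{a}\cdot\sigma\cdot\p{b}{v}$, where $\sigma=\p{u}{v}\cap\p{x}{y}$ has endpoints $a$ and $b$, with $a$ closer to $u$. Because $\p{x}{y}$ is simultaneously the tree path from $y$ to $x$ in $T_y$ and the tree path from $x$ to $y$ in $T_x$, every edge of $\sigma$ already lies in $T_x\cap T_y$, so the analysis reduces to the two legs $\p{u}{a}$ and $\p{b}{v}$.

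The structural core is the claim that $\p{u}{a}\subseteq T_y$ and symmetrically $\p{b}{v}\subseteq T_x$. To prove the first, I argue that $a$ lies on $\p{y}{u}$: the concatenation $\p{u}{a}\cdot\p{a}{y}$ has length $\g{u}{a}+\g{a}{y}$ and is a valid path from $u$ to $y$. At $a$ it does not backtrack, since the outgoing direction of $\p{u}{v}$ at $a$ toward $b$ agrees with the outgoing direction of $\p{x}{y}$ at $a$ toward $y$---both continue along $\sigma$. Consequently the concatenation is itself a geodesic, and uniqueness of shortest paths gives $\p{u}{y}=\p{u}{a}\cdot\p{a}{y}$. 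Hence $a$ is an ancestor of $u$ in $T_y$, and the unique tree path from $u$ up to $a$ in $T_y$ is precisely $\p{u}{a}$. The same reasoning, with the roles of $x/y$ and $u/v$ swapped, gives $\p{b}{v}\subseteq T_x$. For the algorithm, I note that $a=\mathrm{LCA}_{T_y}(u,x)$: both $\p{y}{u}$ and $\p{y}{x}$ pass through $a$, and since sub-paths of geodesics are geodesics, their common prefix from $y$ is exactly $\p{y}{a}$, so they first diverge at $a$. Symmetrically $b=\mathrm{LCA}_{T_x}(v,y)$. With these two $O(1)$ LCA queries in hand, I walk upward from $u$ to $a$ in $T_y$, traverse $\sigma$ from $a$ to $b$, and walk downward from $b$ to $v$ in $T_x$, emitting each edge in constant time, yielding the $O(\g{u}{v})$ bound.

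The main obstacle will be justifying rigorously the ``compatibility of directions'' at $a$ used to deduce that $\p{u}{a}\cdot\p{a}{y}$ is a geodesic; this requires a careful local geometric argument around $a$ as a reflex vertex of $P$, leveraging that $\sigma$ is a shared sub-path of two geodesics. The ``at most one extra edge'' concession in the statement is expected to absorb degenerate configurations---most notably the case where $\sigma$ collapses to a single point and the tree-paths in $T_y$ and $T_x$ do not quite coincide there, forcing a single bridging edge that belongs to neither tree. I would handle this case by a direct geometric argument rather than folding it into the main decomposition.
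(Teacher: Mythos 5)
Your decomposition $\p{u}{v}=\p{u}{a}\cdot\sigma\cdot\p{b}{v}$ with $\sigma=\p{u}{v}\cap\p{x}{y}$, and the argument that the two outer legs lie in $T_y$ and $T_x$ respectively, correctly reproduces the paper's Case~1 (where the two geodesics genuinely share a sub-path, i.e.\ where, in the paper's notation, a vertex of $X$ comes after a vertex of $Y$ along $\p{x}{y}$). Your identification of the endpoints of $\sigma$ via LCA queries also matches the paper. So far so good.

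The gap is in how you dispose of the other case. You describe the situation where ``$\sigma$ collapses to a single point and $\ldots$ a single bridging edge belongs to neither tree'' as a degenerate configuration to be handled ``by a direct geometric argument,'' but this is not a measure-zero degeneracy---it is the generic transversal-crossing case, and it is where essentially all the work in the paper's proof lives. In that case the paper builds the sub-polygon $P'$ bounded by the four geodesic chains $\p{x'}{u'},\p{u'}{y'},\p{y'}{v'},\p{v'}{x'}$ (all of whose edges lie in $T_x\cup T_y$), observes that $P'$ has exactly four convex corners joined by reflex chains so that $\p{u'}{v'}$ can contain at most one chord of $P'$, and then actually \emph{finds} that chord as a common tangent of two of the reflex chains by adapting the linear-time convex-hull tangent algorithm---run in parallel on the two candidate pairs of chains so that the cost is $O(|\p{u'}{v'}|)$ rather than $O(|P'|)$. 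Your proposal neither supplies the structural reason why at most one bridging edge can appear nor says how to locate it within the output-sensitive time budget; both are needed. Relatedly, in this case the crossing point of $\p{u}{v}$ and $\p{x}{y}$ need not be a vertex of either tree, so the clean ``$a=\mathrm{LCA}_{T_y}(u,x)$'' identification and the two tree-walks you rely on do not directly produce the path, and the $O(|\p{u}{v}|)$ bound does not follow from your argument as written.
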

\begin{proof}
Let $X$ (\emph{resp.} $Y$) be the set containing the LCA in $T_x$ (\emph{resp.} $T_y$) of $u,y$, and of $v,y$ (\emph{resp.} $u,x$ and $x,y$). Note that the points of $X\cup Y$ lie on the path $\p{x}{y}$ and can be computed in $O(1)$ time by hypothesis. Moreover, using LCA queries, we can decide their order along the path $\p{x}{y}$ when traversing it from $x$ to $y$. (Both $X$ and $Y$ could consist of a single vertex in some degenerate situations). Two cases arise: 

\textbf{Case 1.} If there is a vertex  $x^*\in X$ lying after a vertex $y^*\in Y$ along $\p{x}{y}$, 
then the path $\p{u}{v}$ contains the path $\p{y^*}{x^*}$. 
In this case, the path $\p{u}{v}$ is the concatenation of the paths $\p{u}{y^*}$, $\p{y^*}{x^*}$, and $\p{x^*}{v}$ and that the three paths are contained in $T_x \cup T_y$.
Moreover, $\p{u}{v}$ can be computed in time proportional to its length by traversing along the corresponding tree; see Figure~\ref{fig:Output Sensitive Algorithm} (top).

\textbf{Case 2.} In this case the vertices of $X$ appear before the vertices of $Y$ along $\p{x}{y}$.
Let $x'$ (\emph{resp.} $y'$) be the vertex of $X$ (\emph{resp.} $Y$) closest to $x$ (\emph{resp.} $y$). 

Let $u'$ be the last vertex of $\p{u}{x}$ that is also in $\p{u}{y}$.
Note that $u'$ can be constructed by walking from $u'$ towards $x$ $y$ until the path towards $y$ diverges. 
Thus, $u'$ can be computed in $O(|\p{u}{u'}|)$ time. 
Define $v'$ analogously and compute it in $O(|\p{v}{v'}|)$ time.

Let $P'$ be the polygon bounded by the geodesic paths $\p{x'}{u'}, \p{u'}{y'}, \p{y'}{v'}$ and  $\p{v'}{x'}$.
Because the vertices of $X$ appear before those of $Y$ along $\p{x}{y}$, $P'$ is a simple polygon; see Figure~\ref{fig:Output Sensitive Algorithm} (bottom).

In this case the path $\p{u}{y}$ is the union of $\p{u}{u'}, \p{u'}{v'}$ and $\p{v'}{v}$.
Because $\p{u}{u'}$ and  $\p{v'}{v}$ can be computed in time proportional to their length, it suffices to compute $\p{u'}{v'}$ in $O(|$\p{u'}{v'}$|)$ time.

Note that $P'$ is a simple polygon with only four convex vertices $x',u', y'$ and $v'$, which are connected by chains of reflex vertices.
Thus, the shortest path from $x'$ to $y'$ can have at most one diagonal edge connecting distinct reflex chains of $P'$. Since the rest of the points in $\p{u'}{v'}$ lie on the boundary of $P'$ and from the fact that each edge of $P'$ is an edge of $T_x\cup T_y$, we conclude all edges of $\p{u}{v}$, except perhaps one, belong to $T_x\cup T_y$.

We want to find the common tangent between the reflex paths $\p{u'}{x'}$ and $\p{v'}{y'}$, or the common tangent of $\p{u'}{y'}$ and $\p{v'}{x'}$ as one of them belongs to the shortest path $\p{u'}{v'}$.
Assume that the desired tangent lies between the paths $\p{u'}{x'}$ and $\p{v'}{y'}$. 
Since these paths consist only of reflex vertices, the problem can be reduced to finding the common tangent of two convex polygons. 
By slightly modifying the linear time algorithm to compute this tangents, we can make it run in $O(|\p{u'}{v'}|)$ time.

Since we do not know if the tangent lies between the paths $\p{u'}{x'}$ and $\p{v'}{y'}$, we process the chains $\p{u'}{y'}$ and $\p{v'}{x'}$ in parallel and stop when finding the desired tangent. Consequently, we can compute the path $\p{u}{v}$ in time proportional to its length. 
\end{proof}

\begin{figure}[tb]
\centering
\includegraphics{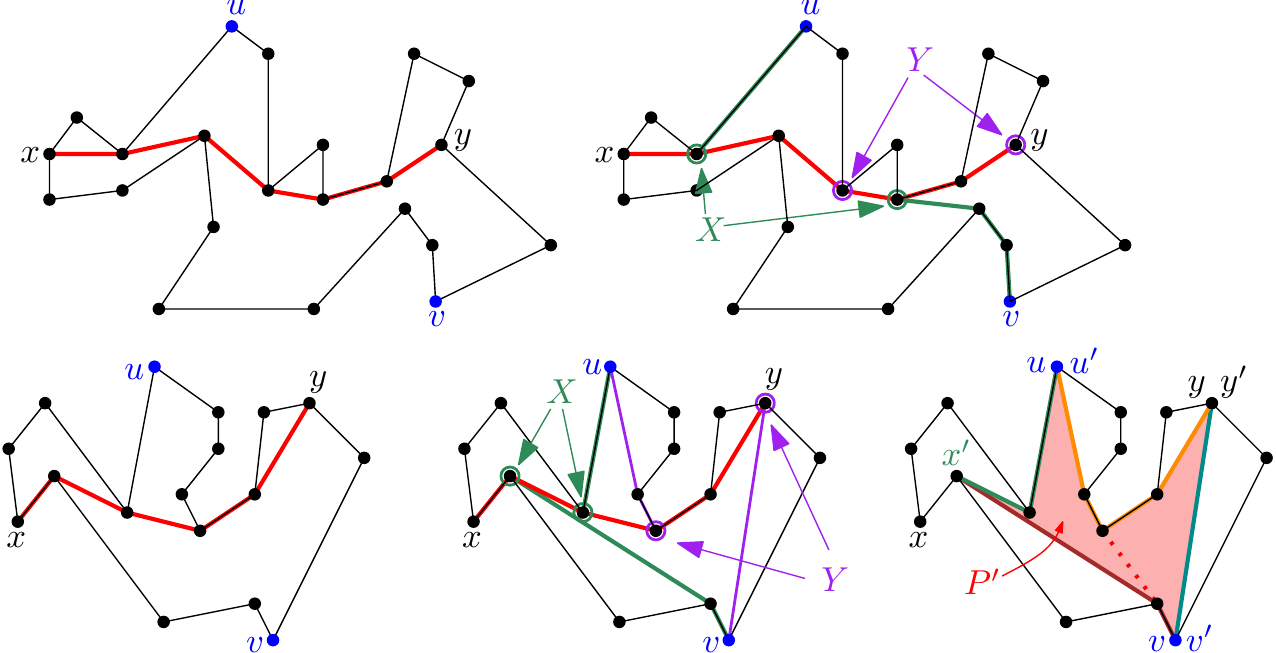}

\caption{\small (top) Case 1 of the proof of Lemma~\ref{lemma:Suri's lemma} where the path $\p{u}{v}$ contains a portion of the path $\p{x}{y}$.
(bottom) Case 2 of the proof of Lemma~\ref{lemma:Suri's lemma} where the path $\p{u}{v}$ has exactly one edge being the tangent of the paths $\p{u'}{y'}$ and $\p{v'}{x'}$.}
\label{fig:Output Sensitive Algorithm}
\end{figure}

\begin{lemma}\label{lemma:Bounding complexity of hourglasses}
Let $P$ be a simple polygon with $n$ vertices.
Given $k$ disjoint transition chains $C_1, \ldots, C_k$  of $\partial P$, it holds that  $$\sum_{i=1}^k |H_{C_i}| = O(n).$$
\end{lemma}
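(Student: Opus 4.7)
The plan is to decompose the complexity of each hourglass $H_{C_i}$ into its four boundary pieces---the top chain $C_i$, the two walls, and the bottom chain---and bound each contribution separately.

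Two of the four contributions will be immediate. The top chains $C_1,\ldots,C_k$ are edge disjoint subchains of $\partial P$ by hypothesis, so $\sum_i |C_i| = O(n)$. By Lemma~\ref{lemma:Ordering Lemma}, the bottom chains of $H_{C_1},\ldots,H_{C_k}$ are also edge disjoint subchains of $\partial P$, so their total complexity is likewise $O(n)$; in particular, $k = O(n)$.

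The hard part will be bounding the total wall complexity. My plan here is to combine Lemmas~\ref{lemma:Split paths},~\ref{lemma:Suri's lemma} and~\ref{lemma:Edges appear a constant number of times} as follows. First apply Lemma~\ref{lemma:Split paths} to obtain $t = O(1)$ separating geodesic paths $\gamma_1,\ldots,\gamma_t$ with vertex endpoints $x_j,y_j$ such that every hourglass $H_{C_i}$ is separated by some $\gamma_{j(i)}$. Since $\gamma_{j(i)}$ separates the top chain from the bottom chain of $H_{C_i}$, the endpoints of each of its two walls alternate with $x_{j(i)},y_{j(i)}$ along $\partial P$, which is precisely the hypothesis of Lemma~\ref{lemma:Suri's lemma}. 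That lemma then certifies that all but at most one edge of the wall belongs to $T_{x_{j(i)}}\cup T_{y_{j(i)}}$, the union of the shortest-path trees rooted at $x_{j(i)}$ and $y_{j(i)}$.

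Let $E_w$ denote the set of distinct chords that appear as an edge in at least one wall. The previous step shows $E_w \subseteq \bigcup_{j=1}^t (T_{x_j}\cup T_{y_j}) \cup E^{*}$, where $E^{*}$ collects the (at most one per wall) exceptional chords, so $|E^{*}|\leq 2k = O(n)$. Since each shortest-path tree has $O(n)$ edges and $t = O(1)$, this gives $|E_w| = O(n)$. Finally, Lemma~\ref{lemma:Edges appear a constant number of times} says that each chord lies in $O(1)$ hourglasses and hence in $O(1)$ walls, so the total wall complexity is $O(1)\cdot |E_w| = O(n)$. Summing the three contributions will yield $\sum_i |H_{C_i}| = O(n)$.

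The main obstacle is the wall bound: walls are geodesic paths that may share chords in complicated ways, and Lemma~\ref{lemma:Edges appear a constant number of times} alone only gives $\sum_i |\text{walls of }H_{C_i}| = O(|E_w|)$, which is useless without an independent bound on $|E_w|$. The combination of Lemma~\ref{lemma:Split paths} with Lemma~\ref{lemma:Suri's lemma} is precisely what is needed to charge all but $O(k) = O(n)$ wall edges against the $O(1)$ near-linear structures $T_{x_j}\cup T_{y_j}$, which pins $|E_w|$ down to $O(n)$.
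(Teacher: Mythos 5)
Your proposal is correct and follows essentially the same route as the paper: edge-disjointness of top chains plus Lemma~\ref{lemma:Ordering Lemma} for the bottom chains, then Lemma~\ref{lemma:Split paths} to get $O(1)$ separating paths, Lemma~\ref{lemma:Suri's lemma} to charge all but one wall edge per wall against the $O(1)$ shortest-path trees, and Lemma~\ref{lemma:Edges appear a constant number of times} to account for chord repetition. The only cosmetic difference is that the paper groups the hourglasses by which split chain $\gamma_j$ separates them and bounds each group by $O(n)$, whereas you bound the global set of distinct wall chords directly before multiplying by the $O(1)$ repetition factor.
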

\begin{proof}

Because the given transition chains are disjoint, Lemma~\ref{lemma:Ordering Lemma} implies that the bottom chains of their respective hourglasses are also disjoint. Therefore, the sum of the complexities of all the top and bottom chains of these hourglasses is $O(n)$. 
To bound the complexity of their walls we use
Lemma~\ref{lemma:Edges appear a constant number of times}. Since  no chord is used more than a constant number of times, it suffices to show that the total number of chords used by all these hourglasses is $O(n)$.

To prove this, we use Lemma~\ref{lemma:Split paths} to construct $O(1)$ \emph{split chains} $\gamma_1, \ldots, \gamma_t$ such that for each $1\leq i\leq k$, there is a split chain $\gamma_j$ that separates the top and bottom chains of $H_{C_i}$.
For each $1\leq j\leq t$, let $$\mathcal H^j = \{H_{C_i} : \text{the top and bottom chain of $H_{C_i}$ are separated by }\gamma_j\}.$$
Since the complexity of the shortest-path trees of the endpoints of $\gamma_j$ is $O(n)$~\cite{guibasShortestPathTree},
and from the fact that the chains $C_1, \ldots, C_k$ are disjoint,  Lemma~\ref{lemma:Suri's lemma} implies that
the total number of edges in all the hourglasses of $\mathcal H^j$ is $O(n)$. Moreover, because each of these edges appears in $O(1)$ hourglasses among $C_1, \ldots, C_k$, we conclude that 
$$\sum_{H \in \mathcal H^j } |H| = O(n).$$
Since we have only $O(1)$ split chains, our result follows.
\end{proof}

\subsection{Funnels}

Let $C = (p_0, \ldots, p_k)$ be a chain of $\partial P$ and let $v$ be a vertex of $P$ not in $C$.
The \emph{funnel} of $v$ to $C$, denoted by $\fn{v}{C}$, is the simple polygon bounded by $C$, $\p{p_k}{v}$ and $\p{v}{p_0}$; see Figure~\ref{fig:Marked vertices decomposition} $(a)$. 
Note that the paths $\p{v}{p_k}$ and $\p{v}{p_0}$ may coincide for a while before splitting into disjoint chains. 
See Lee and Preparata~\cite{lee1984euclidean} or Guibas et al.~\cite{guibasShortestPathTree} for more details on funnels.

A subset $R\subset P$ is \emph{geodesically convex} if for every $x,y\in R$, the path $\p{x}{y}$ is contained in $R$.
This funnel $\fn{v}{C}$ is also known as the geodesic convex hull of $C$ and $v$, i.e., the minimum geodesically convex set that contains $v$ and $C$.

Given two points $x,y\in P$, the (geodesic) \emph{bisector} of $x$ and $y$ is the set of points contained in $P$ that are equidistant from $x$ and $y$. This bisector is a curve, contained in $P$, that consists of circular arcs and hyperbolic arcs. Moreover, this curve intersects $\partial P$ only at its endpoints~\cite[Lemma 3.22]{aronov1989geodesic}.

The (farthest) \emph{Voronoi region} of a vertex $v$ of $P$ is the set of points $R(v) = \{x\in P : \F{P}{x} = \g{x}{v}\}$ (including boundary points).

\begin{lemma}\label{lemma:Funnel contains Voronoi region}
Let $v$ be a vertex of $P$ and let $C$ be a transition chain such $R(v)\cap \partial P \subseteq C$ and $v\not\in C$.
Then, $R(v)$ is contained in the funnel $\fn{v}{C}$
\end{lemma}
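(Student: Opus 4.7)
The plan is to show that every $x \in R(v)$ lies in $\fn{v}{C}$ by prolonging the geodesic from $v$ to $x$ to a boundary point $y$, arguing that $y$ must lie in $C$ by hypothesis, and then using the fact that $\fn{v}{C}$ contains every geodesic from $v$ to a point of $C$.

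If $x \in \partial P$, then $x \in R(v) \cap \partial P \subseteq C \subseteq \fn{v}{C}$ and we are done, so assume $x$ is an interior point of $P$. Let $z$ be the last vertex of $\p{v}{x}$ before $x$ (with $z = v$ when $\p{v}{x}$ is a single segment), and let $y$ be the first point at which the ray from $z$ through $x$, extended past $x$, meets $\partial P$. Such a $y$ exists since $P$ is bounded, and it may be either an interior edge point or a vertex (a reflex vertex lying on the ray still qualifies, since every vertex of $P$ is on $\partial P$). The key geometric claim is that $\p{v}{y} = \p{v}{z} \cup [z, y]$: the cell of the shortest-path map of $v$ containing $x$ has anchor $z$, and within this cell every shortest path from $v$ ends with a straight segment from $z$; the divide lines bounding the cell are all supported on lines through $z$ (each is a shadow ray associated with $z$ and one of its neighbors in the shortest-path tree), and two distinct lines through $z$ meet only at $z$, so our ray cannot exit the cell except at $\partial P$. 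This yields $x \in \p{v}{y}$ and $\g{v}{y} = \g{v}{x} + \g{x}{y}$.

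The rest is a single application of the geodesic triangle inequality. For every vertex $w$ of $P$,
$$\g{y}{w} \leq \g{y}{x} + \g{x}{w} \leq \g{y}{x} + \g{x}{v} = \g{v}{y},$$
where the middle inequality uses $x \in R(v)$. Hence $v$ is a farthest point from $y$, so $y \in R(v)$, and by hypothesis $y \in C$. Since $\fn{v}{C}$ is the geodesic convex hull of $C \cup \{v\}$, it contains $\p{v}{y}$ and therefore contains $x$. The step requiring the most care is the geometric claim $\p{v}{y} = \p{v}{z} \cup [z, y]$; once that extension is established, everything else is automatic.
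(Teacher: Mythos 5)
Your proof is correct, and it takes a genuinely different route from the paper's. The paper argues via geodesic bisectors: writing $a,b$ for the endpoints of $C$, it introduces the regions $h_a$, $h_b$ of points farther from $v$ than from $\ff{a}$, $\ff{b}$, notes $R(v)\subseteq h_a\cap h_b$, and shows $h_a\cap h_b\subseteq\fn{v}{C}$ by contradiction: if this containment failed, a bisector would have to cross a wall of the funnel, and the triangle inequality at the crossing point would force $a$ or $b$ to be farther from $v$ than from its own farthest neighbor. Your argument is pointwise and avoids bisectors entirely: for an interior $x\in R(v)$ you extend the terminal segment of $\p{v}{x}$ from its anchor $z$ to its first hit $y$ on $\partial P$, so that $x\in\p{v}{y}$; the triangle inequality then \emph{transfers} the property $x\in R(v)$ to $y\in R(v)\cap\partial P\subseteq C$, and geodesic convexity of $\fn{v}{C}$ (the paper itself notes the funnel is the geodesic convex hull of $v$ and $C$) finishes it. Your approach is more elementary and local, relying only on that hull characterization and on the standard shortest-path-map fact that the cell anchored at $z$ is star-shaped about $z$ with its non-boundary windows supported on lines through $z$ — you correctly single out this extension step as the crux, and your justification of it (two distinct lines through $z$ meet only at $z$, so the extended ray can leave the cell only through $\partial P$) is sound; equivalently, appending a collinear extension to the final edge of a taut path keeps it taut. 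The paper's proof is a global containment argument between regions and leans on properties of bisectors (their endpoints on $\partial P$, continuity). Both are valid; yours is arguably the cleaner one to verify and reuses less machinery.
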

\begin{proof}
Let $a$ and $b$ be the endpoints of $C$ such that $a,b, \ff{a}$ and $\ff{b}$ appear in this order in a clockwise traversal of $\partial P$.
Because $R(v)\cap \partial P\subset C$, we know that $v$ lies between $\ff{a}$ and $\ff{b}$.

Let $\alpha$ (\emph{resp.} $\beta$) be the bisector of $v$ and $\ff{a}$ (\emph{resp.} $\ff{b}$).
Let $h_a$ (\emph{resp.} $h_b$) be the set of points of $P$ that are farther from $v$ than from $\ff{a}$ (\emph{resp.} $\ff{b}$).
Note that $\alpha$ is the boundary of $h_a$ while $\beta$ bounds $h_b$.

By definition, we know that $R(v)\subseteq h_a\cap h_b$. Therefore, it suffices to show that $h_a\cap h_b\subset \fn{v}{C}$.
Assume for a contradiction that there is a point of $h_a\cap h_b$ lying outside of $\fn{v}{C}$. 
By continuity of the geodesic distance, the boundaries of $h_a\cap h_b$ and $\fn{v}{C}$ must intersect.
Because $a\notin h_a$ and $b\notin h_b$, both bisectors $\alpha$ and $\beta$ must have an endpoint on the edge $ab$.
Since the boundaries of $h_a\cap h_b$ and $\fn{v}{C}$ intersect, we infer that $\beta \cap \p{v}{b}\neq \emptyset$ or $\alpha \cap \p{v}{a}\neq \emptyset$.
Without loss of generality, assume that there is a point $w\in \beta \cap \p{v}{b}$, the case where $w$ lies in $\alpha \cap \p{v}{a}$ is analogous. 

Since $w\in \beta$, we know that $\g{w}{v} = \g{w}{ \ff{b}}$. By the triangle inequality and since $w$ cannot be a vertex of $P$ as $w$ intersects $\partial P$ only at its endpoints, we get that
$$\g{b}{\ff{b}} < \g{b}{w} + \g{w}{\ff{b}} = \g{b}{w} + \g{w}{v} = \g{b}{v}.$$
Which implies that $b$ is farther from $v$ than from $\ff{b}$---a contradiction that comes from assuming that $h_a\cap h_b$ is not contained in $\fn{v}{C}$.
\end{proof}

\section{Decomposing the boundary}\label{Section:Decomposing the boundary}
In this section, we decompose the boundary of $P$ into consecutive vertices that share the same farthest neighbor and edges of $P$ whose endpoints have distinct farthest neighbors.

Using a result from Hershberger and Suri~\cite{hershberger1993matrix}, in $O(n)$ time we can compute the farthest neighbor of each vertex of $P$.
Recall that the farthest neighbor of each vertex of $P$ is always a convex vertex of $P$~\cite{at-cgcsp-85} and is unique by our general position assumption. 

We mark the vertices of $P$ that are farthest neighbors of at least one vertex of $P$. Let $M$ denote the set of marked vertices of $P$ (clearly this set can be computed in $O(n)$ time after applying the result of Hershberger and Suri).
In other words, $M$ contains all vertices of $P$ whose Voronoi region contains at least one vertex of $P$.

Given a vertex $v$ of $P$, the vertices of $P$ whose farthest neighbor is $v$ appear contiguously along $\partial P$~\cite{aronov1993furthest}. Therefore, after computing all these farthest neighbors, we effectively split the boundary into subchains, each associated with a different vertex of $M$; see Figure~\ref{fig:Marked vertices decomposition} $(b)$.

\begin{figure}[tb]
\centering
\includegraphics{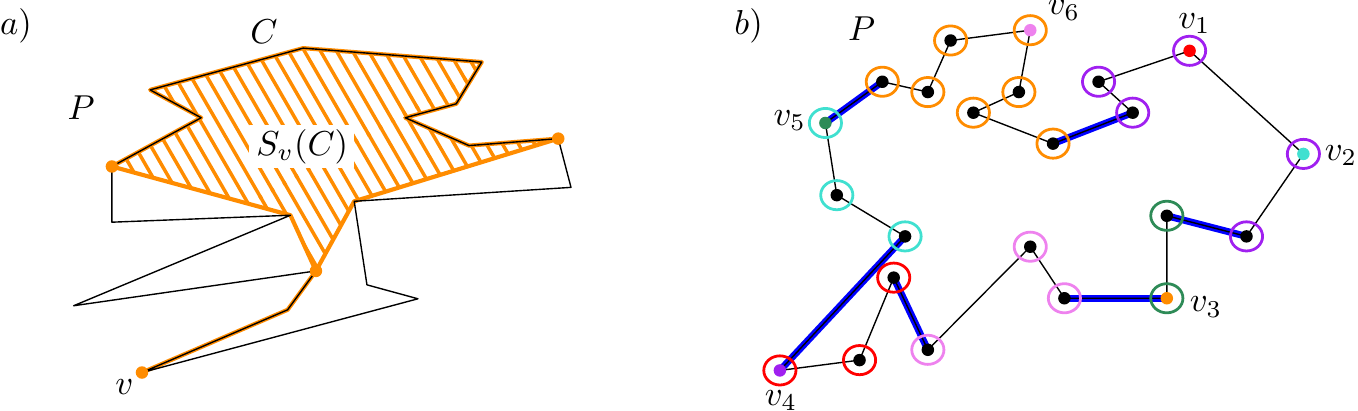}

\caption{\small 
$a)$ The funnel $\fn{v}{C}$ of a vertex $v$ and a chain $C$ contained in $\partial P$ are depicted.
$b)$ Each vertex of the boundary of $P$ is assigned with a farthest neighbor which is then marked. 
The boundary is then decomposed into vertex disjoint chains, each associated with a marked vertex, joined by transition edges (blue) whose endpoints have different farthest neighbors.}
\label{fig:Marked vertices decomposition}
\end{figure}

Let $a$ and $b$ be the endpoints of a transition edge of $\partial P$ such that $a$ appears before $b$ in the clockwise order along $\partial P$. Because $ab$ is a transition edge, we know that $\ff{a}\neq \ff{b}$.
Recall that we have computed $\ff{a}$ and $\ff{b}$ in the previous step and note that $\ff{a}$ appears also before $\ff{b}$ along this clockwise order. 
For every vertex $v$ that lies between $\ff{a}$ and $\ff{b}$ in the bottom chain of $H_{ab}$, we know that there cannot be a vertex $u$ of $P$ such that $\ff{u} = v$.
As proved by Aronov et al.~\cite[Corollary 2.7.4]{aronov1993furthest}, 
if there is a point $x$ on $\partial P$ whose farthest neighbor is $v$, then $x$ must lie on the open segment $(a,b)$. 
In other words, the Voronoi region $R(v)$ restricted to $\partial P$ is contained in~$(a,b)$.

\section{Building hourglasses}\label{Section: Building hourglasses}

Let $E$ be the set of transition edges of $\partial P$.
Given a transition edge $ab\in E$, we say that $H_{ab}$ is a \emph{transition hourglass}.
In order to construct the triangle cover of $P$, we construct the transition hourglass of each transition edge of $E$. By Lemma~\ref{lemma:Bounding complexity of hourglasses}, we know that $\sum_{ab\in E} |H_{ab}| = O(n)$. Therefore, our aim is to compute the cover in time proportional to the size of $H_{ab}$. 

By Lemma~\ref{lemma:Split paths} we can compute a set of $O(1)$ separating paths such that for each transition edge $ab$, the transition hourglass $H_{ab}$ is separated by one (or more) paths in this set. For each endpoint of the $O(1)$ separating paths we compute its shortest-path tree~\cite{guibasShortestPathTree}. In addition, we preprocess these trees in linear time to support LCA queries~\cite{harel1984fast}. Both computations need linear time per endpoint and use $O(n)$ space. Since we do this process for a constant number of endpoints, overall this preprocessing takes $O(n)$ time. 

Let $\gamma$ be a separating path whose endpoints are $x$ and $y$. Note that $\gamma$ separates the boundary of $P$ into two chains $S$ and $S'$ such that $S\cup S' = \partial P$.
Let $\mathcal H(\gamma)$ be the set of each transition hourglass separated by $\gamma$ whose transition edge is contained in $S$ (whenever an hourglass is separated by more than one path, we pick one arbitrarily). Note that we can classify all transition hourglasses into the sets $\mathcal H(\gamma)$ in $O(n)$ time (since $O(1)$ separating paths are considered).

We claim that we can compute all transition hourglass of $\mathcal H(\gamma)$ in $O(n)$ time.
By construction, the wall of each of these hourglasses consists of a (geodesic) path that connects a point in $S$ with a point in $S'$. Let $u\in S$ and $v\in S'$ be two vertices such that $\p{u}{v}$ is the wall of a hourglass in $\mathcal H(\gamma)$.
Because LCA queries can be answered in $O(1)$ time~\cite{harel1984fast}, 
Lemma~\ref{lemma:Suri's lemma} allows us to compute this path in $O(|\p{u}{v}|)$ time. 
Therefore, we can compute all hourglasses of $\mathcal H(\gamma)$ in $O(\sum_{H\in \mathcal H(\gamma)} |H| + n) = O(n)$ time by Lemma~\ref{lemma:Bounding complexity of hourglasses}. 
Because only $O(1)$ separating paths are considered, we obtain the following result.

\begin{lemma}\label{lemma: Hourglass partition}
We can construct the transition hourglass of all transition edges of $P$ in $O(n)$ time.
\end{lemma}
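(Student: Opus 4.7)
The plan is to prove the lemma by exploiting the three structural results already established: the existence of a constant number of separating geodesic paths (Lemma~\ref{lemma:Split paths}), the output-sensitive path reconstruction lemma (Lemma~\ref{lemma:Suri's lemma}), and the global linear complexity bound (Lemma~\ref{lemma:Bounding complexity of hourglasses}). The overall strategy is to charge the cost of computing every transition hourglass either to its own complexity or to a fixed $O(n)$-size preprocessing step done once per separating path.

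First I would invoke Lemma~\ref{lemma:Split paths} to obtain, in $O(n)$ time, a set $\gamma_1,\dots,\gamma_t$ of $t=O(1)$ geodesic paths such that every transition hourglass $H_{ab}$ is separated by at least one $\gamma_j$. For each endpoint of each $\gamma_j$ (of which there are only $O(1)$), I would compute its shortest-path tree in $P$ in $O(n)$ time using the algorithm of Guibas et al.~\cite{guibasShortestPathTree}, and preprocess it for $O(1)$-time LCA queries in linear time via~\cite{harel1984fast}. Since only constantly many such endpoints are considered, the preprocessing costs $O(n)$ overall. Next, I would classify each transition edge $ab\in E$ by assigning it to some $\gamma_j$ that separates $H_{ab}$ (breaking ties arbitrarily), forming sets $\mathcal{H}(\gamma_1),\dots,\mathcal{H}(\gamma_t)$ in $O(n)$ total time.

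The core step is computing the hourglasses in a single $\mathcal{H}(\gamma_j)$. The top chain of each $H_{ab}\in\mathcal{H}(\gamma_j)$ is just the edge $ab$, and the bottom chain is a subchain of $\partial P$ which is identified directly from the farthest-neighbor assignments of Section~\ref{Section:Decomposing the boundary}. It therefore suffices to compute the two walls of each $H_{ab}$, each of which is a geodesic path joining a point of $\partial P$ on one side of $\gamma_j$ to a point on the other. For each such wall $\pi(u,v)$ I would apply Lemma~\ref{lemma:Suri's lemma} with $x,y$ being the endpoints of $\gamma_j$; this recovers $\pi(u,v)$ in $O(|\pi(u,v)|)$ time. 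Summed over all hourglasses in $\mathcal{H}(\gamma_j)$, the total time is $O(n + \sum_{H\in\mathcal{H}(\gamma_j)}|H|)$, and Lemma~\ref{lemma:Bounding complexity of hourglasses} bounds the latter by $O(n)$. Repeating for all $t=O(1)$ separating paths yields $O(n)$ time in total.

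The subtle point, and what I expect to be the main thing to check carefully, is that the output-sensitive cost in Lemma~\ref{lemma:Suri's lemma} genuinely applies to \emph{every} wall we need, not just to a single one: we need to make sure the two endpoints $u$ and $v$ of each wall lie on opposite sides of the split induced by $\gamma_j$ and appear in the cyclic order required by the lemma, so that the charging gives $O(|\pi(u,v)|)$ per wall rather than $O(n)$. This is precisely guaranteed by the definition of $\mathcal{H}(\gamma_j)$ (the separating property of $\gamma_j$ forces the walls of $H_{ab}$ to cross $\gamma_j$), and by Lemma~\ref{lemma:Edges appear a constant number of times} which ensures that any edge used on more than one wall is charged only $O(1)$ times. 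Putting everything together gives the claimed $O(n)$ bound.
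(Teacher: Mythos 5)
Your proposal matches the paper's argument essentially step for step: compute the $O(1)$ separating paths of Lemma~\ref{lemma:Split paths}, build shortest-path trees with LCA support at their endpoints, bucket the transition edges into sets $\mathcal H(\gamma_j)$, reconstruct each wall output-sensitively via Lemma~\ref{lemma:Suri's lemma}, and bound the total by Lemma~\ref{lemma:Bounding complexity of hourglasses}. The only cosmetic difference is that you also spell out the cyclic-order and charging subtleties explicitly, which the paper leaves implicit in its invocation of Lemma~\ref{lemma:Bounding complexity of hourglasses}.
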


\section{Covering the polygon with apexed triangles}\label{Section:Computing apexed triangles}
An \emph{apexed triangle} $\triangle = (a,b,c)$ with \emph{apex} $a$ is a triangle contained in $P$ with an associated distance function $g_\triangle(x)$, called the \emph{apex function} of $\triangle$, such that (1) $a$ is a vertex of $P$, (2) $b,c \in\partial P$, and (3) there is a  vertex $w$ of  $P$, called the \emph{definer} of $\triangle$, such that
$$g_\triangle(x) = \left\{ \begin{array}{lll}
-\infty&&\text{if $x\notin \triangle$}\\
|xa| + \g{a}{w} = \g{x}{w} && \text{if $x\in \triangle$}
\end{array}\right.$$

In this section, we show how to find a set of $O(n)$ apexed triangles of $P$ such that the upper envelope of their apex functions coincides with $\F{P}{x}$.
To this end, we first decompose the transition hourglasses into apexed triangles that encode all the geodesic distance information inside them. For each marked vertex $v\in M$ we construct a funnel that contains the Voronoi region of $v$.  We then decompose this funnel into apexed triangles that encode the distance from $v$.

\subsection{Inside the transition hourglass}
Let $ab$ be a transition edge of $P$  such that $b$ is the clockwise neighbor of $a$ along $\partial P$.
Let $B_{ab}$ denote the bottom chain of $H_{ab}$ after removing its endpoints.
As noticed above, a point on $\partial P$ can be farthest from a vertex in $B_{ab}$ only if it lies in the open segment $ab$.
That is, if $v$ is a vertex of $B_{ab}$ such that $R(v)\neq \emptyset$, then $R(v)\cap \partial P \subset ab$.

In fact, not only this Voronoi region is inside $H_{ab}$ when restricted to the boundary of $P$, but also $R(v)\subset H_{ab}$. 
 The next result follows trivially from Lemma~\ref{lemma:Funnel contains Voronoi region}.

\begin{corollary}\label{lemma:Cell contained in geodesic triangle}
Let $v$ be a vertex of $B_{ab}$. If $R(v)\neq \emptyset$, then $R(v) \subset H_{ab}$.
\end{corollary}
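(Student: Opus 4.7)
The plan is to derive the corollary as an essentially immediate consequence of Lemma~\ref{lemma:Funnel contains Voronoi region} by taking the transition chain there to be the single transition edge $ab$, and then to argue that the resulting funnel sits inside the hourglass $H_{ab}$.

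First I would verify that the hypotheses of Lemma~\ref{lemma:Funnel contains Voronoi region} are met with $C = ab$. The edge $ab$ is a transition edge, so by definition it is a transition chain. The discussion at the end of Section~\ref{Section:Decomposing the boundary} (via~\cite[Corollary 2.7.4]{aronov1993furthest}) already established that if $v$ is a vertex of $B_{ab}$ with $R(v) \ne \emptyset$, then any point on $\partial P$ that is farthest from $v$ must lie on the open segment $(a,b)$. Hence $R(v)\cap \partial P \subseteq ab$. Finally, $v$ lies strictly between $\ff{a}$ and $\ff{b}$ on $\partial P$, so $v \notin ab$. Lemma~\ref{lemma:Funnel contains Voronoi region} therefore yields $R(v) \subseteq \fn{v}{ab}$.

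The only remaining step is to show $\fn{v}{ab} \subseteq H_{ab}$. The funnel is bounded by $ab$ and the geodesic paths $\p{a}{v}$ and $\p{v}{b}$, while $H_{ab}$ is bounded by $ab$, the walls $\p{b}{\ff{a}}$ and $\p{\ff{b}}{a}$, and the bottom sub-chain of $\partial P$ from $\ff{a}$ to $\ff{b}$. The three points $a$, $b$, $v$ all lie on $\partial H_{ab}$ (with $v$ on its bottom chain), so it suffices to argue that the geodesic paths $\p{a}{v}$ and $\p{v}{b}$ in $P$ never leave $H_{ab}$. This follows from the fact that the walls of $H_{ab}$ are themselves geodesic paths in $P$: if, say, $\p{a}{v}$ were to cross the wall $\p{\ff{b}}{a}$, one could reroute it along a sub-arc of that wall and obtain a strictly shorter path between the same endpoints, contradicting that the wall is a geodesic. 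Thus $\p{a}{v}, \p{v}{b} \subseteq H_{ab}$, and so $\fn{v}{ab}\subseteq H_{ab}$, which combined with the previous paragraph gives $R(v) \subseteq H_{ab}$.

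The main obstacle is the last step, namely justifying that $H_{ab}$ is geodesically ``shielded'' by its walls so that the funnel $\fn{v}{ab}$ cannot spill outside. Since the walls are geodesics meeting $\partial P$ only at their endpoints (openness of $H_{ab}$ from Lemma~\ref{lemma:Transition hourglasses are open}), this reduces to the standard sub-path argument sketched above; no new machinery is required, and the rest of the proof is just a short application of Lemma~\ref{lemma:Funnel contains Voronoi region}.
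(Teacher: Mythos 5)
Your proof is correct and follows essentially the same route the paper has in mind: apply Lemma~\ref{lemma:Funnel contains Voronoi region} with $C=ab$ (the hypotheses $R(v)\cap\partial P\subseteq ab$ and $v\notin ab$ are exactly as you verify), and then observe that the funnel $\fn{v}{ab}$ sits inside $H_{ab}$. The paper declares this last containment ``trivial,'' whereas you spell it out, which is a worthwhile addition since it is the only non-immediate step.

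One small imprecision in the containment argument: if $\p{a}{v}$ exited and re-entered $H_{ab}$ through the wall $\p{\ff{b}}{a}$ at points $s$ and $t$, rerouting the excursion along the wall's sub-arc from $s$ to $t$ does \emph{not} produce a strictly shorter path. Both the excursion and the wall sub-arc are sub-paths of geodesics and hence are themselves shortest paths between $s$ and $t$, so they have equal length. The contradiction should instead be drawn from the uniqueness of geodesics in a simple polygon (the two shortest $s$--$t$ paths would have to coincide, so no genuine excursion exists), or equivalently from the standard fact that a geodesic chord splits $P$ into two geodesically convex pieces, which immediately gives that $H_{ab}$ is geodesically convex and hence contains the geodesic convex hull $\fn{v}{ab}$ of $\{v\}\cup ab$. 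Neither fix changes the structure of your proof.
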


Our objective is to compute $O(|H_{ab}|)$ apexed triangles that cover $H_{ab}$, each with its distance function, such that the upper envelope of these apex functions coincides with $\F{P}{x}$ restricted to $H_{ab}$ where it ``matters''.

The same approach was already used by Pollack et al. in~\cite[Section 3]{pollackComputingCenter}. 
Given a segment contained in the interior of $P$, they show 
how to compute a linear number of apexed triangles such that $\F{P}{x}$ coincides with the upper envelope of the corresponding apex functions in the given segment.

While the construction we follow is analogous, we use it in the transition hourglass $H_{ab}$ instead of the full polygon $P$. 
Therefore, we have to specify what is the relation between the upper envelope of the computed functions and $\F{P}{x}$. 
We will show that the upper envelope of the apex functions computed in $H_{ab}$ coincides with $\F{P}{x}$ inside the Voronoi region $R(v)$ of every vertex $v\in B_{ab}$.

Let $T_a$ and $T_b$ be the shortest-path trees in $H_{ab}$ from $a$ and $b$, respectively. Assume that $T_a$ and $T_b$ are rooted at $a$ and $b$, respectively.
We can compute these trees in $O(|H_{ab}|)$ time~\cite{guibasShortestPathTree}. 
For each vertex $v$ between $\ff{a}$ and $\ff{b}$, let $v_a$ and $v_b$ be the neighbors of $v$ in the paths $\p{v}{a}$ and $\p{v}{ b}$, respectively.
We say that a vertex $v$ is \emph{visible} from $ab$ if $v_a\neq v_b$.
Note that if a vertex is visible, then the extension of these segments must intersect the top segment $ab$. 
Therefore, for each visible vertex $v$, we obtain a triangle $\triangle_v$ as shown in Figure~\ref{fig:Hourglass Cover}.

\begin{figure}[tb]
\centering
\includegraphics{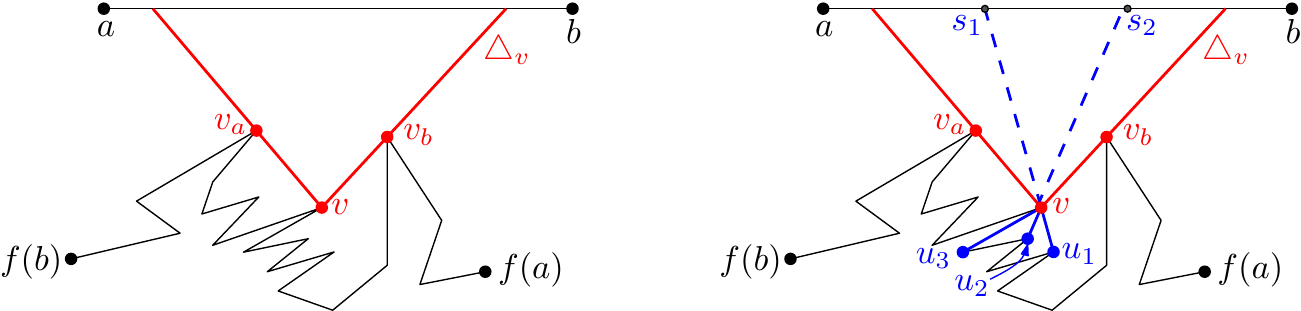}

\caption{\small (left) A vertex $v$ visible from the segment $ab$ lying on the bottom chain of $H_{ab}$, and the triangle $\triangle_v$ which contains the portion of $ab$ visible from $v$. (right) The children $u_1$ and $u_2$ of $v$ are visible from $ab$ while $w_3$ is not. The triangle  $\triangle_v$ is split into apexed triangles by the rays going from $u_1$ and $u_2$ to $v$. }
\label{fig:Hourglass Cover}
\end{figure}

We further split $\triangle_v$ into a series of triangles with apex at $v$ as follows: 
Let $u$ be a child of $v$ in either $T_a$ or $T_b$. As noted by Pollack et al., $v$ can be of three types, either (1) $u$ is not visible from $ab$ (and is hence a child of $v$ in both $T_a$ and $T_b$); or (2) $u$ is visible from $ab$, is a child of $v$ only in $T_b$, and $v_b v u$ is a left turn; or (3) $u$ is visible from $ab$, is a child of $v$ only in $T_a$, and $v_a v u$ is a right turn.

Let $u_1, \ldots, u_{k-1}$ be the children of $v$ of type $(2)$ sorted in clockwise order around $v$.
Let $c(v)$ be the maximum distance from $v$ to any invisible vertex in the subtrees of $T_a$ and $T_b$ rooted at $v$; if no such vertex exists, then $c(v) = 0$. 
Define a function $d_l(v)$ on each vertex $v$ of $H_{ab}$ in a recursive fashion as follows:
If $v$ is invisible from $ab$, then $d_l(v) = c(v)$. 
Otherwise, let $d_l(v)$ be the maximum of $c(v)$ and $\max\{d_l(u_i) + |u_iv| : u_i$ is a child of $v$ of type $(2)\}$.
Similarly we define a symmetric function $d_r(v)$ using the children of type (3) of $v$.

For each $1\leq i\leq k-1$, extend the segment $u_iv$ past $v$ until it intersects $ab$ at a point $s_i$. Let $s_0$ and $s_k$ be the intersections of the extensions of $vv_a$ and $vv_b$ with the segment $ab$.
We define then $k$ triangles contained in $\triangle_v$ as follows. 
For each $0\leq i\leq k-1$, consider the triangle $\triangle(s_i, v, s_{i+1})$ whose associated apexed (left) function is 
$$f_i(x) = \left\{ \begin{array}{lll}
|xv| + \max_{j>i}\{c(v), |vu_j| + d_l(u_j)\} && \text{if $x\in \triangle(s_i, v, s_{i+1})$}\\
-\infty&&\text{otherwise}
\end{array}\right.$$
In a symmetric manner, we define a set of apexed triangles induced by the type (3) children of~$v$ and their respective apexed (right) functions.

Let $g_1, \ldots, g_r$ and $\triangle_1, \ldots, \triangle_r$ respectively be an enumeration of all the generated apex functions and triangles such that $g_i$ is defined in the triangle $\triangle_i$. Because each function is determined uniquely by a pair of adjacent vertices in $T_a$ or in $T_b$, and since these trees have $O(|H_{ab}|)$ vertices, we conclude that $r = O(|H_{ab}|)$. 

Note that for each $1\leq i\leq r$, the triangle $\triangle_i$ has two vertices on the segment $ab$ and a third vertex, say $a_i$, called its \emph{apex} such that for each $x\in \triangle_i$, $g_i(x) = \g{x}{w_i}$ for some vertex $w_i$ of $H_{ab}$. We refer to $w_i$ as the \emph{definer} of $\triangle_i$. Intuitively, $\triangle_i$ defines a portion of the geodesic distance function from $w_i$ in a constant complexity region. 

\begin{lemma}\label{lemma:Triangles inside hourglasses}
Given a transition edge $ab$ of $P$, we can compute a set $\mathcal A_{ab}$ of $O(|H_{ab}|)$ apexed triangles in $O(|H_{ab}|)$ time with the property that for any point $p\in P$ such that $\ff{p}\in B_{ab}$,
there is an apexed triangle $\triangle\in \mathcal A_{ab}$ with apex function $g$ and definer equal to $\ff{p}$ such that
\begin{enumerate}
\item $p\in \triangle$ and 
\item $g(p) = \F{P}{p}$.
\end{enumerate}
\end{lemma}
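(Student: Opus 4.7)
The plan is to verify both the combinatorial size and the running time of the construction just described, and then to establish correctness of the apex functions on Voronoi regions of vertices in $B_{ab}$. For the complexity bound, I would note that $T_a$ and $T_b$ can be built in $O(|H_{ab}|)$ time. Every apexed triangle corresponds to a distinct (apex, child-of-apex) pair in $T_a \cup T_b$, so there are $O(|H_{ab}|)$ of them. The quantities $c(v)$, $d_l(v)$, $d_r(v)$ used to specify the apex functions are computed by a single bottom-up pass over each tree, and from these one reads off the definer of each triangle as the argmax witness. Hence $|\mathcal A_{ab}| = O(|H_{ab}|)$ and the whole collection is produced in $O(|H_{ab}|)$ time.

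For correctness, fix $p \in P$ with $v = \ff{p} \in B_{ab}$. Since $p \in R(v)$, Corollary~\ref{lemma:Cell contained in geodesic triangle} places $R(v) \subset H_{ab}$, hence $p \in H_{ab}$. The apexed subtriangles $\triangle(s_i, v', s_{i+1})$ produced by the construction tile $H_{ab}$, so $p$ belongs to one of them; call this triangle $\triangle$, its apex $v'$, and its apex function $g$.

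The main obstacle is to show that the definer $w$ of $\triangle$ is exactly $v$. By construction, $g(p) = |p v'| + M$, where $M = \max_{u \in S} \g{v'}{u}$ and $S$ is the set of bottom-chain vertices lying in the subtree of $T_a \cup T_b$ hanging off $v'$ on the side selected by the index $i$. For every $u \in S$ the geodesic $\p{p}{u}$ is forced to pass through $v'$, giving $\g{p}{u} = |p v'| + \g{v'}{u}$. A short geometric argument, adapting Pollack et al.\ from the full polygon to $H_{ab}$, shows that $v \in S$: if $v$ belonged to the complementary subtree, then $\p{v'}{v}$ would leave $v'$ on a side inconsistent with $p \in \triangle(s_i, v', s_{i+1})$, contradicting the fact that $v'$ must lie on $\p{p}{v}$.

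Once $v \in S$ is secured, the farthest-neighbor property finishes the proof: for every $u \in S$, $\g{p}{u} = |p v'| + \g{v'}{u} \le \g{p}{v} = |p v'| + \g{v'}{v}$, so the maximum in the definition of $g(p)$ is attained at $u = v$. Consequently $w = v$ and $g(p) = \g{p}{v} = \F{P}{p}$, giving both required conclusions.
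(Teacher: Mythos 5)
The complexity and running-time part of your argument is fine, and the overall target (show that the definer of the covering triangle is $\ff{p}$) is a reasonable way to attack the lemma. But there is a real gap at the point where you pick the triangle.

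You assert that the subtriangles $\triangle(s_i,v',s_{i+1})$ \emph{tile} $H_{ab}$ and then let $\triangle$ be ``the'' one containing $p$. They do not tile: for distinct visible vertices $v_1,v_2$ the wedges $\triangle_{v_1}$ and $\triangle_{v_2}$ can overlap substantially (take $H_{ab}$ to be a convex quadrilateral $abcd$ with top edge $ab$: then $\triangle_c=abc$ and $\triangle_d=abd$ overlap in a large region). So $p$ may lie in several apexed triangles, and your subsequent claim that ``$v'$ must lie on $\p{p}{v}$'' is simply false for all but one of them. In the quadrilateral example with $p$ near $a$ and $v=\ff{p}=c$, the point $p$ also lies in $\triangle_d$, whose apex $d$ is not on $\p{p}{c}$; the apex function of $\triangle_d$ evaluates to $|pd|<|pc|=\F{P}{p}$, so that triangle does not witness the lemma. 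Your argument never explains how to select the correct covering triangle, and once an arbitrary one is allowed, both ``$v'\in\p{p}{v}$'' and the downstream claim that every $u\in S$ has $\p{p}{u}$ passing through $v'$ can fail.

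The paper resolves exactly this point by going in the opposite direction: it first fixes $v'$ to be the neighbor of $p$ on $\p{p}{\ff{p}}$, and then observes that (by construction of the fan at $v'$) $p$ lies in one of the subtriangles apexed at that particular $v'$. With $v'$ chosen this way, the next vertex $z$ after $v'$ on $\p{p}{\ff{p}}$ falls into one of the three types (invisible, type (2), type (3)), and the recursive quantities $c(v'),d_l,d_r$ give $g(p)\ge\g{p}{\ff{p}}$; the reverse inequality $g(p)\le\F{P}{p}$ is immediate because $g$ encodes a genuine geodesic distance. You should restructure your correctness argument around this choice of apex rather than an arbitrary covering triangle; with that fix the rest of your farthest-neighbor argument goes through.
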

\begin{proof}
Because $p\in R(\ff{p})$, Lemma~\ref{lemma:Cell contained in geodesic triangle} implies that $p\in H_{ab}$. 
Consider the path $\p{p}{\ff{p}}$ and let $v$ be the neighbor of $p$ along this path. 
By construction of $\mathcal A_{ab}$, there is a triangle $\triangle\in \mathcal A_{ab}$ apexed at $v$ with definer $w$ that contains $p$. The apex function $g(x)$ of $\triangle$ encodes the geodesic distance from $x$ to $w$. 
Because $\F{P}{x}$ is the upper envelope of all the geodesic functions, we know that $g(p) \leq \F{P}{p}$.

To prove the other inequality, note that if $v = \ff{p}$, then trivially $g(p) = |pv| + \g{v}{w} \geq |pv| = \g{p}{\ff{p}} = \F{P}{p}$. 
Otherwise, let $z$ be the next vertex after $v$ in the path $\p{p}{\ff{p}}$. Three cases arise:

($a$) If $z$ is invisible from $ab$, then so is $\ff{p}$ and hence, 
$$\g{p}{ \ff{p}} = |pv| + \g{v}{ \ff{p}} \leq |pv| + c(v) \leq g(p).$$

($b$) If $z$ is a child of type (2), then $z$ plays the role of some child $u_j$ of $v$ in the notation used during the construction.
In this case:
$$\g{p}{\ff{p}} = |pv| + |v z| + \g{z}{\ff{p}} \leq |pv| + |v u_j| + d_l(u_j) \leq g(p).$$

($c$) If $z$ is a child of type (3), then analogous arguments hold using the (right) distance~$d_r$.

Therefore, regardless of the case $\F{P}{p} = \g{p}{ \ff{p}} \leq g(p)$.

To bound the running time, note that the recursive functions $d_l, d_r$ and $c$ can be computed in $O(|T_a| + |T_b|)$ time. Then, for each vertex visible from $ab$, we can process it in time proportional to its degree in $T_a$ and $T_b$.
Because the sum of the degrees of all vertices in $T_a$ and $T_b$ is $O(|T_a| + |T_b|)$ and from the fact that both $|T_a|$ and $|T_b|$ are $O(|H_{ab}|)$, we conclude that the total running time to construct $\mathcal A_{ab}$ is $O(|H_{ab}|)$.
\end{proof}

In other words, Lemma~\ref{lemma:Triangles inside hourglasses} says that no information on farthest neighbors is lost if we only consider the functions in $\mathcal A_{ab}$ within $H_{ab}$.
In the next section we use a similar approach to construct a set of apexed triangles (and their corresponding apex functions), so as to encode the distance from the vertices of $M$.

\subsection{Inside the funnels of marked vertices}
Recall that for each marked vertex $v\in M$, we know at least of one vertex on $\partial P$ such that $v$ is its farthest neighbor.
For any marked vertex $v$, let $u_1, \ldots, u_{k-1}$ be the vertices of $P$ such that $v = \ff{u_i}$ and assume that they appear in this order when traversing $\partial P$ clockwise. Let $u_0$ and $u_k$ be the neighbors of $u_1$ and $u_{k-1}$ other than $u_2$ and $u_{k-2}$, respectively. Note that both $u_0 u_1$ and $u_{k-1}u_k$ are transition edges of $P$. Thus, we can assume that their transition hourglasses have been computed.

Let $C_v = (u_0, \ldots, u_k)$ and consider the funnel $\fn{v}{C_v}$.
We call $C_v$ the \emph{main chain} of $\fn{v}{C_v}$ while $\p{u_k}{ v}$ and $\p{v}{ u_0}$ are referred to as the \emph{walls} of the funnel.  
Because $v = \ff{u_1} = \ff{u_{k-1}}$, we know that $v$ is a vertex of both $H_{u_0 u_1}$ and  $H_{u_{k-1}u_k}$. 
By definition, we have $\p{v}{ u_0}\subset H_{u_0u_1}$ and $\p{v}{u_k}\subset H_{u_{k-1}u_k}$. Thus, we can explicitly compute both paths $\p{v}{ u_0}$ and $\p{v}{u_k}$ in $O( |H_{u_0 u_1}| + |H_{u_{k-1}u_k}|)$ time.
So, overall, the funnel $\fn{v}{C_v}$ can be constructed in $O(k + |H_{u_0 u_1}| + |H_{u_{k-1}u_k}|)$ time. 
Recall that, by Lemma~\ref{lemma:Bounding complexity of hourglasses}, the total sum of the complexities of the transition hourglasses is $O(n)$. In particular, we can bound the total time needed to construct the funnels of all marked vertices by $O(n)$. 

Since the complexity of the walls of these funnels is bounded by the complexity of the transition hourglasses used to compute them, we get that $$\sum_{v\in M} |\fn{v}{C_v}|  = O\left(n + \sum_{ab\in E} |H_{ab}|\right) = O(n).$$

\begin{lemma}\label{lemma:Farthest points from marked are in funnel}
Let $x$ be a point in $P$. If $v = \ff{x}$ is a vertex of $M$, then $x\in \fn{v}{C_v}$.
\end{lemma}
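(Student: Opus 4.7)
The plan is to reduce this to Lemma~\ref{lemma:Funnel contains Voronoi region}, which says that if $C$ is a transition chain with $v\notin C$ and $R(v)\cap\partial P\subseteq C$, then $R(v)\subseteq \fn{v}{C}$. Since $f(x)=v$ means $x\in R(v)$, it suffices to show that the chain $C_v=(u_0,\ldots,u_k)$ satisfies the three hypotheses of that lemma.

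First I would verify that $C_v$ is a transition chain. By construction of $C_v$, the vertices $u_1,\ldots,u_{k-1}$ are precisely the consecutive vertices of $\partial P$ with farthest neighbor $v$, and $u_0, u_k$ are their outside neighbors, so $u_0u_1$ and $u_{k-1}u_k$ are transition edges. Therefore $\ff{u_0}\neq v=\ff{u_1}$ and $\ff{u_k}\neq v=\ff{u_{k-1}}$. In particular neither $\ff{u_0}$ nor $\ff{u_k}$ equals $v$, so neither is an interior vertex of $C_v$. To see that $\ff{u_0}\neq \ff{u_k}$, note that Section~\ref{Section:Decomposing the boundary} decomposes $\partial P$ into maximal chains of vertices sharing a common farthest neighbor, separated by transition edges: $u_0$ belongs to the chain immediately clockwise-preceding the $v$-chain and $u_k$ to the one immediately following it, so their farthest neighbors lie in different such chains and are therefore distinct (using that $P$ has enough marked vertices; the degenerate case $|M|\le 2$ can be handled directly).

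Second, I would verify that $v\notin C_v$. The vertices $u_1,\ldots,u_{k-1}$ cannot equal $v$ because $\ff{u_i}=v$ forces $u_i\neq v$ (a vertex is not its own farthest point). The endpoint cases $v=u_0$ or $v=u_k$ are excluded since $\ff{u_0},\ff{u_k}\neq v$ would be contradicted.

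Third, I would check that $R(v)\cap\partial P\subseteq C_v$. By definition of $M$ and of $C_v$, the only vertices of $\partial P$ lying in $R(v)$ are $u_1,\ldots,u_{k-1}$, all contained in $C_v$. A boundary point of $R(v)$ that is not a vertex of $P$ must lie in the relative interior of some edge; since the set of boundary points with farthest neighbor $v$ is connected (as established in Section~\ref{Section:Decomposing the boundary}, following Aronov et al.), it extends from the arc $u_1\cdots u_{k-1}$ at most into the adjacent edges $u_0u_1$ and $u_{k-1}u_k$, both of which belong to $C_v$.

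With the three hypotheses verified, Lemma~\ref{lemma:Funnel contains Voronoi region} yields $R(v)\subseteq \fn{v}{C_v}$, and since $x\in R(v)$ the claim follows. The only real obstacle is the bookkeeping around degenerate configurations (ensuring $f(u_0)\neq f(u_k)$ and that $v$ is not one of the $u_i$), which is handled by the general position assumption made at the start of the paper.
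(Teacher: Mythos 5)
Your proof takes exactly the same route as the paper: reduce to Lemma~\ref{lemma:Funnel contains Voronoi region} by checking that $C_v$ is a transition chain with $v\notin C_v$ and $R(v)\cap\partial P\subseteq C_v$, and then use $x\in R(v)$. The paper's own proof is much terser (it asserts $\ff{u_0}\neq\ff{u_k}$, notes that $R(v)\cap\partial P\subseteq C_v$ by construction, and invokes the lemma), so your version is more an expansion than a different argument.

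One step of your verification is a non-sequitur: you write that since $\ff{u_0}\neq v$ and $\ff{u_k}\neq v$, ``neither is an interior vertex of $C_v$.'' The interior vertices of $C_v$ are $u_1,\ldots,u_{k-1}$, not $v$, so ``$\neq v$'' does not by itself rule out $\ff{u_0}$ or $\ff{u_k}$ lying among the $u_i$; you would need to argue, e.g.\ via the ordering/non-crossing property of the pairs $(x,\ff{x})$ along $\partial P$, that $\ff{u_0}$ cannot land inside the arc $u_1,\ldots,u_{k-1}$. That said, the paper's proof does not verify this sub-condition of ``transition chain'' at all, so this is a gap shared with (indeed, exceeded by) the original, not a flaw specific to your write-up.
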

\begin{proof}
Since $\ff{u_0} \neq \ff{u_k}$, $C_v$ is a transition chain. Moreover, $C_v$ contains $R(v)\cap \partial P$ by definition. Therefore, Lemma~\ref{lemma:Funnel contains Voronoi region} implies that $R(v)\subset \fn{v}{C_v}$.
Since $v = \ff{x}$, we know that $x\in R(v)$ and hence that $x \in \fn{v}{C_v}$. 
\end{proof}

We now proceed to split a given funnel into $O(|\fn{v}{C_v}|)$ apexed triangles that encode the distance function from $v$. To this end, we compute the shortest-path tree $T_v$ of $v$ in $\fn{v}{C_v}$ in $O(|\fn{v}{C_v}|)$ time~\cite{guibasShortestPathQueries}.
We consider the tree $T_v$ to be rooted at $v$ and assume that for each node $u$ of this tree we have stored the geodesic distance $\g{u}{v}$. 

Start an Eulerian tour from $v$ walking in a clockwise order of the edges. Let Let $w_1$ be the first leaf of $T_v$ found, and let $w_2$ and $w_3$ be the next two vertices visited in the traversal. Two cases arise:

\textbf{Case 1:} $w_1, w_2, w_3$ makes a right turn. We define $s$ as the first point hit by the ray apexed at $w_2$ that shoots in the direction opposite to $w_3$. 

We claim that $w_1$ and $s$ lie on the same edge of the boundary of $\fn{v}{C_v}$. Otherwise, there would be a vertex $u$ visible from $w_2$ inside the wedge with apex $w_2$ spanned by $w_1$ and $w_3$.
Note that the first edge of the path $\p{u}{v}$ is the edge $uw_2$. Therefore, $uw_2$ belongs to the shortest-path $T_v$ contradicting the Eulerian order in which the vertices of this tree are visited as $u$ should be visited before $w_3$. Thus, $s$ and $w_1$ lie on the same edge and $s$ can be computed in $O(1)$ time.

At this point, we construct the apexed triangle $\triangle(w_2, w_1, s)$ apexed at $w_2$ with apex function 
$$g(x) = \left\{ \begin{array}{lll}
 |x w_2| + \g{w_2}{v} && \text{if }x\in \triangle(w_2, w_1, s)\\
-\infty&&\text{otherwise}
\end{array}\right.$$
We modify tree $T_v$ by removing the edge $w_1w_2$ and replacing the edge $w_3w_2$ by the edge $w_3s$; see Figure~\ref{fig:Funnel Cover}.

\textbf{Case 2:} $w_1, w_2, w_3$ makes a left turn and $w_1$ and $w_3$ are adjacent, 
then if $w_1$ and $w_3$ lie on the same edge of $\partial P$, we construct an apexed triangle $\triangle(w_2, w_1, w_3)$ apexed at $w_2$ with apex function 
$$g(x) = \left\{ \begin{array}{lll}
|x w_2| + \g{w_2}{v} && \text{if }x\in \triangle(w_2, w_1, w_3)\\
-\infty&&\text{otherwise}
\end{array}\right.$$
Otherwise, let $s$ be the first point of the boundary of $\fn{v}{C_v}$ hit by the ray shooting from $w_3$ in the direction opposite to $w_2$.

By the same argument as above, we can show that $w_1$ and $s$ lie on the same edge of the boundary of $\fn{v}{C_v}$ (and thus, we can compute $s$ in $O(1)$ time). We construct an apexed triangle $\triangle(w_2, w_1, s)$ apexed at $w_2$ with apex function 
$$g(x) = \left\{ \begin{array}{lll}
|x w_2| + \g{w_2}{v} && \text{if }x\in \triangle(w_2, w_1, s)\\
-\infty&&\text{otherwise}
\end{array}\right..$$
We modify the tree $T_v$ by removing the edge $w_1w_2$ and adding the edge $w_3s$; see Figure~\ref{fig:Funnel Cover} for an illustration.

\begin{figure}[tb]
\centering
\includegraphics{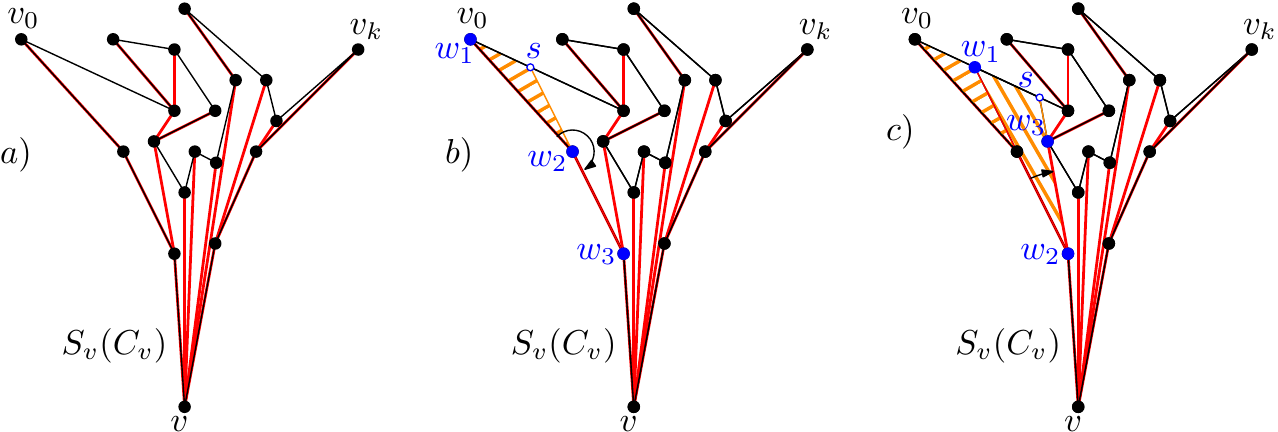}

\caption{\small The funnel $\fn{v}{C_{v}}$ and the shortest-path tree from $v$  are depicted in $(a)$ .
The two cases of the algorithm described in Lemma~\ref{lemma:Triangles inside funnels} are shown in $(b)$ and $(c)$.}
\label{fig:Funnel Cover}
\end{figure}

\begin{lemma}\label{lemma:Triangles inside funnels}
The above procedure runs in $O(|\fn{v}{C_v}|)$ time and computes $O(|\fn{v}{C_v}|)$ interior disjoint apexed triangles such that their union covers $\fn{v}{C_v}$. 
Moreover, for each point $x\in R(v)$,  
there is an apexed triangle $\triangle$ with apex function $g(x)$ such that 
(1) $x\in \triangle$ and (2) $g(x) = \F{P}{x}$.
\end{lemma}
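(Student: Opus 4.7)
The plan is to maintain the loop invariant that after every iteration of the peeling procedure, the unprocessed region $F'\subseteq \fn{v}{C_v}$ is itself a funnel with apex $v$, the current tree $T'$ is the shortest-path tree of $v$ inside $F'$, and the apexed triangles produced so far tile $\fn{v}{C_v}\setminus F'$ with pairwise interior-disjoint interiors. All three conclusions of the lemma will follow from this invariant together with a termination argument.

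First I would dispatch the bookkeeping. Since $T_v$ has $O(|\fn{v}{C_v}|)$ nodes and every iteration removes exactly one leaf while emitting one apexed triangle, both the triangle count and the number of iterations are linear. The per-iteration cost is $O(1)$: the Eulerian tour identifies the triple $(w_1,w_2,w_3)$ in amortized constant time, and the argument already given in the algorithm description, that any vertex $u$ of $T_v$ visible from $w_2$ inside the wedge between $w_1$ and $w_3$ would be visited before $w_3$, implies that $w_1$ and $s$ must share a boundary edge of $F'$, so the ray shot from $w_2$ opposite $w_3$ is computed in constant time. For the tiling, I would verify inductively that each triangle $\triangle(w_2,w_1,s)$ is contained in the current $F'$, and that replacing the subarc of $\partial F'$ from $w_1$ to $s$ by the segment $w_3 s$ (or the segment $w_1 w_3$ in Case~2) produces a new funnel $F''$ whose shortest-path tree from $v$ is exactly the tree obtained from $T'$ by deleting $w_1$ and re-anchoring $w_3$ at $s$. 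This preserves the invariant and, at termination, shows that the emitted triangles tile $\fn{v}{C_v}$ with disjoint interiors.

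The main obstacle is the distance property. Given $x\in R(v)$, Lemma~\ref{lemma:Farthest points from marked are in funnel} places $x$ in $\fn{v}{C_v}$, so the tiling yields some triangle $\triangle(w_2,w_1,s)$ containing $x$ with apex function $g(x)=|x w_2|+\g{w_2}{v}$. The triangle inequality gives $g(x)\ge \g{x}{v}=\F{P}{x}$, and the reverse inequality reduces to showing that $w_2$ is the first vertex along the geodesic $\p{x}{v}$. Here I would use that $\fn{v}{C_v}$ is geodesically convex, so $\p{x}{v}$ lies inside the sub-funnel $F'$ in which $\triangle(w_2,w_1,s)$ was peeled (all previously peeled triangles are separated from $v$ by their apex rays). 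In Case~1 the right turn at $w_2$ together with $s$ being the first boundary hit of the ray from $w_2$ opposite $w_3$ makes $w_2$ a reflex vertex of $F'$ whose occlusion line from $v$'s side is precisely $w_2 s$; hence every point of $\triangle(w_2,w_1,s)$ lies in the visibility shadow of $w_2$ relative to the remainder of $F'$, forcing $\p{x}{v}$ to bend at $w_2$. Case~2 is symmetric. Concatenating the segment $xw_2$ with the path from $w_2$ to $v$ encoded in $T_v$ then yields $\g{x}{v} = |xw_2| + \g{w_2}{v}$, so $g(x)=\F{P}{x}$, completing the proof.
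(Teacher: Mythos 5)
Your treatment of the distance property is a reasonable unpacking of what the paper takes for granted: the paper simply asserts that the apex function in each triangle equals the geodesic distance to $v$, while you spell out the two inequalities—the triangle inequality gives $g(x)\ge\g{x}{v}$, and the reverse follows once $w_2$ is shown to be the first vertex on $\p{x}{v}$, which you argue via geodesic convexity of the current sub-funnel $F'$ and the visibility shadow of $w_2$. This is in the same spirit as the paper, just more explicit, and it is essentially sound (modulo your tacit claim that each $F'$ remains geodesically convex, which does hold because a chord splits a geodesically convex region into geodesically convex pieces).

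The running-time argument, however, has a genuine gap. You write that "every iteration removes exactly one leaf while emitting one apexed triangle," and conclude that the number of iterations is $O(|T_v|)$. While $w_1$ is indeed removed in each iteration, in Case~1 and in the second sub-case of Case~2 the procedure also \emph{adds} a new leaf $s$ to the tree, so the tree does not shrink in those iterations. Consequently, "one leaf removed per iteration" does not bound the number of iterations by the initial tree size—it does not even immediately rule out non-termination. The paper's proof handles precisely this issue with an amortized argument: it observes that the edges introduced by the modification (such as $w_3 s$) are chords of the polygon, that such chord edges never trigger further leaf additions, and hence that the total number of leaves added throughout the run is bounded by the number of edges of $T_v$, which is $O(|\fn{v}{C_v}|)$. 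Without some version of that accounting, your claim of $O(|\fn{v}{C_v}|)$ triangles and $O(|\fn{v}{C_v}|)$ time does not follow from the invariant you set up.
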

\begin{proof}
The above procedure splits $\fn{v}{C_v}$ into apexed triangles, such that their apex function in each of them is defined as the geodesic distance to $v$. 
By Lemma~\ref{lemma:Farthest points from marked are in funnel}, if $x\in R(v)$, then $x\in \fn{v}{C_v}$. 
Therefore, there is an apexed triangle $\triangle$ with apex function $g(x)$ such that $x\in \triangle$ and $g(x) = \g{x}{v} = \F{P}{x}$. Consequently, we obtain properties (1) and (2).

We now bound the running time of the algorithm. 
The shortest-path tree $T_v$ from $v$ is computed in $O(|\fn{v}{C_v}|)$ time~\cite{guibasShortestPathTree}.
 For each leaf of $T_v$ we need a constant number of operations to determine in which of the cases we are in (and to treat it as well). Therefore, it suffices to bound the number of times these steps are performed.
Note that a leaf is removed from the tree in each iteration. 
Since the number of leaves strictly decreases each time we are in Case 2,  this step cannot happen more than $O(|\fn{v}{C_v}|)$ times. 
In Case 1 a new leaf is added if $w_1$ and $w_3$ do not lie on the same edge of $\partial P$. 
However, the number of leaves that can be added throughout is at most the number of edges of $T_v$. 
Note that the edges added by either Case 1 or 2 are chords of the polygon and hence do not generate further leaves. 
Because $|T_v| = O(|\fn{v}{C_v}|)$, we conclude that both Case 1 and 2 are only executed $O(|\fn{v}{C_v}|)$ times.
\end{proof}

\section{Prune and search}\label{section:Prune and search}
With the tools introduced in the previous sections, we can proceed to give the prune and search algorithm to compute the geodesic center. 
The idea of the algorithm is to partition $P$ into $O(1)$ cells, determine on which cell of $P$ the center lies and recurse on that cell as a new subproblem with smaller complexity.

Naturally, we can discard all apexed triangles that do not intersect the new cell containing the center. Using the properties of the cutting, we can show that both the complexity of the cell containing the center, and the number of apexed triangles that intersect it decrease by a constant fraction in each iteration of the algorithm. This process is then repeated until either of the two objects has constant descriptive size. 

Let $\tau$ be the set all apexed triangles computed in previous sections. Lemmas~\ref{lemma:Bounding complexity of hourglasses} and~\ref{lemma:Triangles inside funnels} directly provide a bound on the complexity of $\tau$.

\begin{corollary}\label{lemma:Size of tau}
The set $\tau$ consists of $O(n)$ apexed triangles.
\end{corollary}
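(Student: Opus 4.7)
The plan is to observe that $\tau$ is the disjoint union of two families of apexed triangles produced by the two constructions of Section~\ref{Section:Computing apexed triangles}: the family $\bigcup_{ab \in E} \mathcal{A}_{ab}$ built inside the transition hourglasses, and the family $\bigcup_{v \in M} \mathcal{T}_v$ built inside the funnels $\fn{v}{C_v}$ of the marked vertices. To bound $|\tau|$, I will bound each family separately and then add them.

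For the hourglass family, Lemma~\ref{lemma:Triangles inside hourglasses} gives $|\mathcal{A}_{ab}| = O(|H_{ab}|)$ for each transition edge $ab \in E$. Since the transition edges of $\partial P$ are pairwise edge-disjoint transition chains, Lemma~\ref{lemma:Bounding complexity of hourglasses} yields
\[
\sum_{ab \in E} |\mathcal{A}_{ab}| \;=\; O\!\Bigl(\sum_{ab \in E} |H_{ab}|\Bigr) \;=\; O(n).
\]
For the funnel family, Lemma~\ref{lemma:Triangles inside funnels} gives $|\mathcal{T}_v| = O(|\fn{v}{C_v}|)$ for each marked vertex $v \in M$. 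The discussion preceding Lemma~\ref{lemma:Farthest points from marked are in funnel} already established, using that the walls of each $\fn{v}{C_v}$ are inherited from the adjacent transition hourglasses and again invoking Lemma~\ref{lemma:Bounding complexity of hourglasses}, that
\[
\sum_{v \in M} |\fn{v}{C_v}| \;=\; O\!\Bigl(n + \sum_{ab \in E} |H_{ab}|\Bigr) \;=\; O(n),
\]
so $\sum_{v \in M} |\mathcal{T}_v| = O(n)$ as well.

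Adding the two bounds gives $|\tau| = O(n)$, as claimed. There is no real obstacle here: the statement is essentially a bookkeeping corollary that collects the per-hourglass and per-funnel size estimates of Lemmas~\ref{lemma:Triangles inside hourglasses} and~\ref{lemma:Triangles inside funnels} and feeds them into the global hourglass-complexity bound of Lemma~\ref{lemma:Bounding complexity of hourglasses}. The only point worth stating carefully is that both constructions charge their work to the complexity of transition hourglasses (directly for $\mathcal{A}_{ab}$, and indirectly through the walls of $\fn{v}{C_v}$ for $\mathcal{T}_v$), which is precisely why the $O(n)$ total budget suffices for both families simultaneously.
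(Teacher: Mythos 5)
Your proof is correct and follows essentially the same route as the paper: the paper's (terse) justification is that Lemmas~\ref{lemma:Bounding complexity of hourglasses} and~\ref{lemma:Triangles inside funnels} directly bound $|\tau|$, and you unfold exactly that bookkeeping. If anything, you are slightly more careful than the paper in also citing Lemma~\ref{lemma:Triangles inside hourglasses} for the per-hourglass count $|\mathcal A_{ab}| = O(|H_{ab}|)$, which the paper leaves implicit.
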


Let $\phi(x)$ be the upper envelope of the apex functions of every triangle in $\tau$ (i.e., $\phi(x) = \max\{g_i(x) : g_i(x)\in\tau\}$). The following result is a direct consequence of Lemmas~\ref{lemma:Triangles inside hourglasses} and \ref{lemma:Triangles inside funnels}, and shows that the $O(n)$ apexed triangles of $\tau$ not only cover $P$, but their apex functions suffice to reconstruct the function $\F{P}{x}$.

\begin{lemma}\label{lemma:Optimization problem same as geodesic center}
The functions $\phi(x)$ and $\F{P}{x}$ coincide in the domain of points of $P$, i.e., for each $p\in P$, $\phi(p) = \F{P}{p}$.
\end{lemma}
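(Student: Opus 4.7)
The plan is to prove the two inequalities $\phi(p)\le \F{P}{p}$ and $\phi(p)\ge \F{P}{p}$ separately for every $p\in P$.

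The upper bound $\phi(p)\le \F{P}{p}$ is essentially by construction. Every apexed triangle $\triangle\in\tau$ (whether coming from Lemma~\ref{lemma:Triangles inside hourglasses} or from Lemma~\ref{lemma:Triangles inside funnels}) has a definer $w$ that is a vertex of $P$, and its apex function satisfies $g_\triangle(p)=\g{p}{w}$ whenever $p\in\triangle$ and $g_\triangle(p)=-\infty$ otherwise. Since $\F{P}{p}$ is, by definition, the largest geodesic distance from $p$ to any point of $P$, we immediately get $g_\triangle(p)\le \g{p}{w}\le \F{P}{p}$. Taking the maximum over all triangles in $\tau$ yields $\phi(p)\le \F{P}{p}$.

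For the lower bound, fix any $p\in P$ and let $v=\ff{p}$; recall $v$ is a vertex of $P$. I split into two cases according to whether $v$ is marked. If $v\in M$, then $v$ has an associated main chain $C_v$ and funnel $\fn{v}{C_v}$, and since $p\in R(v)$ Lemma~\ref{lemma:Triangles inside funnels} directly produces an apexed triangle $\triangle$ in $\tau$ with apex function $g$ such that $p\in\triangle$ and $g(p)=\F{P}{p}$. If $v\notin M$, then no vertex of $P$ has $v$ as its farthest neighbor; by the discussion at the end of Section~\ref{Section:Decomposing the boundary} (based on Corollary~2.7.4 of~\cite{aronov1993furthest}), the Voronoi region $R(v)$ restricted to $\partial P$ must lie inside an open transition edge $(a,b)$, and $v$ must be a vertex of the bottom chain $B_{ab}$ of the corresponding transition hourglass. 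Lemma~\ref{lemma:Triangles inside hourglasses} then gives an apexed triangle $\triangle\in\mathcal{A}_{ab}\subseteq\tau$ with definer $v$ and apex function $g$ such that $p\in\triangle$ and $g(p)=\F{P}{p}$. In both cases $\phi(p)\ge g(p)=\F{P}{p}$.

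The main obstacle, such as it is, is the case analysis on $v=\ff{p}$: I need to be confident that whenever $v$ is unmarked, $v$ really is an interior vertex of some $B_{ab}$ so that Lemma~\ref{lemma:Triangles inside hourglasses} applies. This is precisely the structural content of Section~\ref{Section:Decomposing the boundary}, namely that the marked vertices cover all Voronoi regions whose closure meets a vertex of $P$, while the remaining non-empty Voronoi regions have their boundary intersection with $\partial P$ contained in transition edges, forcing their defining vertex to sit in the corresponding bottom chain. Once this dichotomy is in hand the two lemmas snap together and give equality for every $p\in P$.
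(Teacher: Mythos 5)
Your proof is correct and follows the route the paper has in mind: the paper states the lemma as a direct consequence of Lemmas~\ref{lemma:Triangles inside hourglasses} and~\ref{lemma:Triangles inside funnels}, and your case split on whether $\ff{p}\in M$ is exactly the dichotomy that makes those two lemmas cover every point of $P$. The structural fact you flag --- that an unmarked $v=\ff{p}$ with $R(v)\neq\emptyset$ lies in the interior $B_{ab}$ of some bottom chain --- is indeed what Section~\ref{Section:Decomposing the boundary} provides, since the bottom chains of the transition hourglasses tile $\partial P$ with the marked vertices as their shared endpoints.
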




Given a chord $C$ of $P$ a \emph{half-polygon} of $P$ is one of the two simple polygons in which $C$ splits $P$.
A \emph{\tcell} of $P$ is a simple polygon obtained as the intersection of at most four half-polygons.
Because a \tcell is the intersection of geodesically convex sets, it is also geodesically convex.


Let $R$ be a \tcell of $P$ and let $\tau_R$ be the set of apexed triangles of $\tau$ that intersect $R$. 
Let $\m = \max\{|R|, |\tau_R|\}$.
Recall that, by construction of the apexed triangles, for each triangle of $\tau_R$ at least one and at most two of its boundary segments is a chord of $P$
Let $\mathcal C$ be the set containing all chords that belong to the boundary of a triangle of $\tau_R$. 
Therefore, $|\tau_R| \leq |\mathcal C| \leq 2|\tau_R|$.

To construct an $\varepsilon$-net of $\mathcal C$, we need some definitions (for more information on $\varepsilon$-nets refer to~\cite{ConstructionEpsilonNets}).
Let $\varphi$ be the set of all open \tcells of $P$.
For each $t\in \varphi$, let $\mathcal C_t = \{C\in \mathcal C: C\cap t \neq \emptyset\}$ be the set of chords of $\mathcal C$ induced by $t$. 
Finally, let $\varphi_\mathcal C = \{\mathcal C_t : t\in \varphi\}$ be the family of subsets of $\mathcal C$ induced by $\varphi$.

Let $\varepsilon >0$ (the exact value of $\varepsilon$ will be specified later).
Consider the range space $(\mathcal C, \varphi_\mathcal C)$ defined by $\mathcal C$ and $\varphi_\mathcal C$. 
Because the VC-dimension of this range space is finite, we can compute an $\varepsilon$-net $N$ of $(\mathcal C, \varphi_\mathcal C)$ in $O(n/\varepsilon)=O(n)$ time~\cite{ConstructionEpsilonNets}. 
The size of $N$ is $O(\frac{1}{\varepsilon} \log \frac{1}{\varepsilon}) = O(1)$ and its main property is that any \tcell that does not intersect a chord of $N$ will intersect at most $\varepsilon |\mathcal C|$ chords of $\mathcal C$. 

Observe that $N$ partitions $R$ into $O(1)$ sub-polygons (not necessarily \tcells). We further refine this partition by performing a \tcell decomposition. That is, we shoot vertical rays up and down from each endpoint of $N$, and from the intersection point of  any two segments of $N$, see Figure~\ref{fig:Cutting of Chords}. Overall, this partitions $R$ into $O(1)$ \tcells such that each  either $(i)$  is a convex polygon contained in $P$ of at most four vertices, or otherwise $(ii)$ contains some chain of $\partial P$. 
Since $|N| = O(1)$, the whole decomposition can be computed in $O(\m)$ time (the intersections between segments of $N$ are done in constant time, and for the ray shooting operations we walk along the boundary of $R$ once).

\begin{figure}[tb]
\centering
\includegraphics{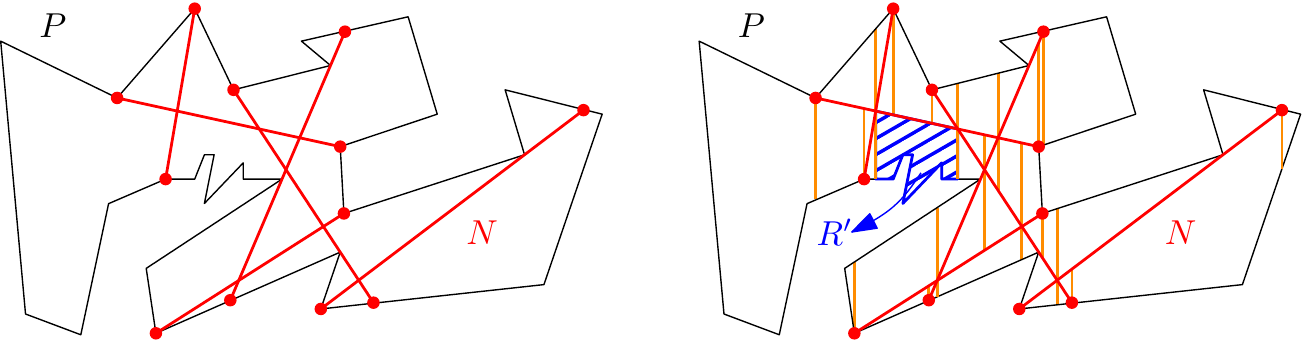}

\caption{\small The $\epsilon$-net $N$ splits $P$ into $O(1)$ sub-polygons that are further refined into a \tcell decomposition using $O(1)$ ray-shooting queries from the vertices of the arrangement defined by~$N$.}
\label{fig:Cutting of Chords}
\end{figure}

In order to determine which \tcell contains the geodesic center of $P$, 
we extend each edge of a \tcell to a chord $C$. 
This can be done with two ray-shooting queries (each of which takes $O(\m)$ time).
We then use the chord-oracle from Pollack et al.~\cite[Section~3]{pollackComputingCenter} to decide which side of $C$ contains $\cp$.
The only requirement of this technique is that the function $\F{P}{x}$ coincides with the upper envelope of the apex functions when restricted to $C$.
Which is true by Lemma~\ref{lemma:Optimization problem same as geodesic center} and from the fact that $\tau_R$ consists of all the apexed triangles of $\tau$ that intersect~$R$.

Because the chord-oracle described by Pollack et al.~\cite[Section~3]{pollackComputingCenter} runs in linear time on the number of functions defined on $C$, we can decide in total $O(\m)$ time on which side of $C$ the geodesic center of $P$ lies. 
Since our decomposition into \tcells has constant complexity, 
we need to perform $O(1)$ calls to the oracle before determining the \tcell $\reg$ that contains the geodesic center of $P$. 

The chord-oracle computes the minimum of $\F{P}{x}$ restricted to the chord before determining the side containing the minimum. In particular, if $\cp$ lies on any chord bounding $\reg$, then the chord-oracle will find it. 
Therefore, we can assume that $\cp$ lies in the interior of $\reg$. Moreover, since $N$ is a $\varepsilon$-net, we know that at most $\varepsilon |\mathcal C|$ chords of $\mathcal C$ will intersect $\reg$.

Using a similar argument, we can show that the complexity of $\reg$ also decreases: since $|\mathcal C| \leq 2|\tau_R| \leq 2\m$, we guarantee that at most $2\varepsilon \m$ apexed triangles intersect~$\reg$. Moreover, each vertex of $\reg$ is in at least one apexed triangle of $\tau_R$ by Lemma~\ref{lemma:Optimization problem same as geodesic center}, and by construction, each apexed triangle can cover at most three vertices. Thus, by the pigeonhole principle we conclude that $\reg$ can have at most $6 \varepsilon \m$ vertices. 
Thus, if we choose $\varepsilon = 1/12$, we guarantee that both the size of the \tcell $\reg$ and the number of apexed triangles in $\tau_{\reg}$ are at most $\m/2$. 


In order to proceed with the algorithm on  $\reg$ recursively, we need to compute the set $\tau_{\reg}$ with the at most $\varepsilon |\mathcal C|$ apexed triangles of $\tau_R$ that intersect $\reg$ (i.e., prune the apexed triangles that do not intersect with $\reg$). For each apexed triangle $\triangle\in \tau_R$, we can determine in constant time if it intersects $\reg$ (either one of the endpoints is in $\reg\cap \partial P$ or the two boundaries have non-empty intersection in the interior of $P$). 
Overall, we need $O(\m)$ time to compute the at most $\varepsilon |\mathcal C|$ triangles of $\tau_R$ that intersect $\reg$.

By recursing on $\reg$, we guarantee that after $O(\log \m)$ iterations, we reduce the size of either $\tau_R$ or $\reg$ to constant. 
In the former case, the minimum of $\F{P}{x}$ can be found by explicitly constructing function $\phi$ in $O(1)$ time. 
In the latter case, we triangulate $\reg$ and apply the chord-oracle to determine which triangle will contain $\cp$. 
The details needed to find the minimum of $\phi(x)$ inside this triangle are giving the next section.


\begin{lemma}\label{lemma:Finding the convex trapezoid}
In $O(n)$ time we can find either the geodesic center of $P$ or a triangle containing the geodesic center.
\end{lemma}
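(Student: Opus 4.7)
The plan is to verify that the prune-and-search procedure described just above the lemma statement runs in $O(n)$ total time and halts with one of the two claimed outcomes. The initial subproblem is $R_0 = P$ with $\tau_{R_0} = \tau$, so by Corollary~\ref{lemma:Size of tau} we have $m_{R_0} = O(n)$. I would analyze one iteration on a subproblem $(R, \tau_R)$ of parameter $\m$: the set $\mathcal C$ has size $O(\m)$, so an $\varepsilon$-net $N$ for $(\mathcal C, \varphi_{\mathcal C})$ is computed in $O(\m)$ time; the \tcell decomposition of $R$ induced by $N$ and the $O(1)$ vertical ray-shooting queries takes $O(\m)$ time by walking along $\partial R$; each of the $O(1)$ invocations of the Pollack et al.\ chord-oracle~\cite{pollackComputingCenter} runs in $O(|\tau_R|) = O(\m)$ time because, by Lemma~\ref{lemma:Optimization problem same as geodesic center} and the fact that $\tau_R$ contains every apexed triangle of $\tau$ meeting $R$, the restriction of $\F{P}{\cdot}$ to any chord bounding a \tcell of the decomposition equals the upper envelope of the apex functions restricted there; and filtering $\tau_R$ down to $\tau_{\reg}$ is another $O(\m)$ scan. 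So one iteration costs $O(\m)$ time.

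Next I would justify the geometric contraction. The choice $\varepsilon = 1/12$, combined with the $\varepsilon$-net property and the earlier argument bounding the vertices of $\reg$ via the pigeonhole principle on apexed triangles, gives $|\reg| \leq \m/2$ and $|\tau_{\reg}| \leq \m/2$, hence $m_{\reg} \leq \m/2$. Summing the per-iteration costs along the recursion yields a geometric series
\[
O(n) + O(n/2) + O(n/4) + \cdots = O(n),
\]
which is the overall time bound.

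Finally I would handle termination. The recursion stops as soon as $\m = O(1)$, which occurs after $O(\log n)$ iterations. Two stopping cases arise. If $|\tau_R| = O(1)$, then by Lemma~\ref{lemma:Optimization problem same as geodesic center} the upper envelope $\phi$ of the $O(1)$ apex functions in $\tau_R$ agrees with $\F{P}{\cdot}$ on $R$; its minimizer, and thus $\cp$, can be computed in $O(1)$ time from the closed-form apex functions. Otherwise $|R| = O(1)$; I would triangulate $R$ into $O(1)$ triangles and invoke the chord-oracle $O(1)$ times on the extensions of their edges to chords, using Lemma~\ref{lemma:Optimization problem same as geodesic center} again to certify the oracle's hypotheses, and return the unique triangle containing $\cp$.

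The only subtle point, and what I expect to be the main obstacle in the write-up, is maintaining throughout the recursion the invariant that $\tau_R$ contains \emph{every} apexed triangle of $\tau$ intersecting $R$, since this invariant is what licenses both the use of the chord-oracle (it needs $\phi|_C = \F{P}{\cdot}|_C$) and the final explicit computation of $\phi$ inside the terminal cell. This invariant is preserved by the filtering step, because an apexed triangle not intersecting $R$ cannot intersect any sub-\tcell $\reg \subseteq R$ either.
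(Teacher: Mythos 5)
Your proposal is correct and follows essentially the same route as the paper: the lemma is proved by the prune-and-search procedure described immediately before it, and your write-up simply makes explicit the per-iteration cost of $O(\m)$, the contraction $m_{\reg} \leq \m/2$ via the $\varepsilon = 1/12$ net, the resulting geometric series summing to $O(n)$, the two termination cases, and the invariant that $\tau_R$ retains every apexed triangle of $\tau$ meeting $R$. Your observation that this last invariant is what licenses both the chord-oracle calls and the terminal explicit computation of $\phi$ is exactly the reasoning the paper relies on (see the sentence beginning ``Which is true by Lemma~\ref{lemma:Optimization problem same as geodesic center} and from the fact that $\tau_R$ consists of all the apexed triangles of $\tau$ that intersect~$R$'').
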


\section{Solving the problem restricted to a triangle}\label{Section:Solving convex optimization poblem}
In order to complete the algorithm it remains to show how to find the geodesic center of $P$ for the case in which $\reg$ is a triangle. If this triangle is in the interior of $P$, it may happen that several apexed triangles of $\tau$ fully contain $\reg$. 
Thus, the pruning technique used in the previous section cannot be further applied. We solve this case with a different approach.
  
Recall that $\phi(x)$ denotes the upper envelope of the apex functions of the triangles in~$\tau$, and the geodesic center is the point that minimizes $\phi$.
The key observation is that, as it happened with chords, the function $\phi(x)$ restricted to $\reg$ is convex. 

Let $\triangle_{1}, \triangle_{2}, \ldots, \triangle_{m}$ be the set of $m= O(n)$ apexed triangles of $\tau$ that intersect $\reg$. 
Let $g_i(x)$  be the apex function of $\triangle_i$ such that 
$$g(x) = \left\{ \begin{array}{lll}
|x a_i| + \kappa_i && \text{if }x\in \triangle_i\\
-\infty&&\text{otherwise}
\end{array}\right. ,$$
where $a_i$ and $w_i$ are the apex and the definer of $\triangle_i$, respectively, and $\kappa_i = \g{a_i}{w_i}$ is a constant.

By Lemma~\ref{lemma:Optimization problem same as geodesic center}, $\phi(x) = \F{P}{x}$. 
Therefore, the problem of finding the center is equivalent to the following optimization problem in $\mathbb{R}^3$:

\textbf{(P1).} Find a point $(x,r)\in \mathbb{R}^3$ minimizing $r$ subject to $x\in \reg$ and
$$\text{$g_i(x) \leq r$, for $1\leq i \leq m$}.$$

Thus, we need only to find the solution to (P1) to find the geodesic center of $P$.
We use some remarks described by Megiddo in order to simplify the description of (P1)~\cite{megiddo1989ball}.

To simplify the formulas, we square the equation $|x a_i| \leq r - \kappa_i$:
$$\|x\|^2 - 2x\cdot a_i + \|a_i\|^2  = |x a_i|^2 \leq (r - \kappa_i)^2 = r^2 - 2r\kappa_i + \kappa_i^2.$$ 
And finally for each $1\leq i\leq m$, we define the function $h_i(x, r)$ as follows:
$$h_i(x, r) = \left\{ \begin{array}{lll}
 \|x\|^2 - 2x\cdot a_i + \|a_i\|^2  - r^2 + 2r\kappa_i - \kappa_i^2 && \text{if }x\in \triangle_i\\
-\infty&&\text{otherwise}
\end{array}\right. $$

Therefore, our optimization problem can be reformulated as:

\textbf{(P2).} Find a point $(x,r)\in \mathbb{R}^3$ such that $r$ is minimized subject to $x\in \reg$ and 
$$h_i(x, r) \leq 0 \text{ and  $r > \max\{\kappa_i\}$, for $1\leq i \leq m$}.$$

Let $h_i'(x,r) = \|x\|^2 - 2x\cdot a_i + \|a_i\|^2  - r^2 + 2r\kappa_i - \kappa_i^2$ be a function defined in the entire plane and let (P2$'$) be an optimization problem analogous to (P2) where every instance of $h_i(x,r)$ is replaced by $h_i'(x,r)$.
The optimization (P2$'$) was studied by Megiddo in~\cite{megiddo1989ball}. We provide some of the intuition used by Megiddo to solve this problem.

Although the functions $h_i'(x,r)$ are not linear inside $\triangle_i$, they all have the same non-linear terms. Therefore, for $i\neq j$, we get that
$h_i'(x,r) = h_j'(x, r)$ defines a \emph{separating plane}
$$\gamma_{i,j} = \{(x, r) \in \mathbb{R}^3:  2( \kappa_i - \kappa_j) r - 2 (a_i - a_j) \cdot x + \|a_i\|^2 - \|a_j\|^2 - \kappa_i^2 + \kappa_j^2 = 0\}$$

As noted by Megiddo~\cite{megiddo1989ball}, this separating plane has the following property:
If the solution $(x, r)$ to (P2$'$) is known to lie to one side of $\gamma_{i,j}$, then we know that one of the constraints is redundant. 

Thus, to solve (P2$'$) it sufficed to have a \emph{side-decision oracle} to determine on which side of a plane $\gamma_{i,j}$ the solution lies. Megiddo showed how to implement this oracle in a way that the running time is proportional to the number of constraints~\cite{megiddo1989ball}.

Once we have such an oracle, Megiddo's problem can be solved using a prune and search similar to that introduced in Section~\ref{section:Prune and search}: pair the functions arbitrarily, and consider the set of $m/2$ separating planes defined by these pairs. For some constant $r$, compute a $1/r$-cutting in $\mathbb{R}^3$ of the separating planes.
A $1/r$-cutting is a partition of the plane into $O(r^2)$ convex regions each of which is of constant size and intersects at most $m/2r$ separating planes.
A cutting of planes can be computed in linear time in $\mathbb{R}^3$ for any $r = O(1)$~\cite{matousekCuttings}.
After computing the cutting, determine in which of the regions the minimum lies by performing $O(1)$ calls to the side-decision oracle. 
Because at least $(r-1)m/2r$ separating planes do not intersect this constant size region, for each of them we can discard one of the constraints as it becomes redundant. Repeating this algorithm recursively we obtain a linear running time.

To solve (P2) we follow a similar approach, but our set of separating planes needs to be extended in order to handle apex functions as they are only defined in the same way as in (P2$'$) in a triangular domain.
Note that the vertices of each apexed triangle that intersect $\reg$ have their endpoints either outside of $\reg$ or on its boundary.

\subsection{Optimization problem in a convex domain}
In this section we describe our algorithm to solve the optimization problem (P2). 
To this end, we pair the apexed triangles arbitrarily to obtain $m/2$ pairs.
By identifying the plane where $P$ lies with the plane $Z_0 = \{(x,y,z): z = 0\}$, we can embed each apexed triangle in $\mathbb{R}^3$.
A \emph{plane-set} is a set consisting of at most five planes in $\mathbb{R}^3$.
For each pair of apexed triangles $(\triangle_i, \triangle_j)$ we define a plane-set as follows: 
For each chord bounding either $\triangle_i$ or $\triangle_j$, consider the line extending this chord and the vertical extrusion of this line in $\mathbb{R}^3$, i.e.,  the plane containing this chord orthogonal to $Z_0$. Moreover, consider the separating plane~$\gamma_{i,j}$. The set containing these planes is the plane-set of the pair $(\triangle_i, \triangle_j)$.

Let $\Gamma$ be the union of all the plane-sets defined by the $m/2$ pairs of apexed triangles. Thus, $\Gamma$ is a set that consists of $O(m)$ planes. Compute an $1/r$-cutting of $\Gamma$ in $O(m)$ time for some constant $r$ to be specified later.
Because $r$ is constant, this $1/r$-cutting splits the space into $O(1)$ convex regions, each bounded by a constant number of planes~\cite{matousekCuttings}. 
Using a side-decision algorithm (to be specified later), we can determine the region $Q$ of the cutting that contains the solution to (P2). Because $Q$ is the region of a $1/r$-cutting of $\Gamma$, we know that at most $|\Gamma|/r$ planes of $\Gamma$ intersect $Q$. In particular, at most $|\Gamma|/r$ plane-sets intersect $Q$ and hence, at least $(r-1)|\Gamma|/r$ plane-sets do not intersect $Q$. 

Let $(\triangle_i, \triangle_j)$ be a pair such that its plane-set does not intersect $Q$. 
Let $Q'$ be the projection of $Q$ on the plane $Z_0$. Because the plane-set of this pair does not intersect $Q$, we know that $Q'$ intersects neither the boundary of $\triangle_i$ nor that of $\triangle_j$.
Two cases arise:

\textbf{Case 1.} If either $\triangle_i$ or $\triangle_j$ does not intersect $Q'$, then we know that their apex function is redundant and we can drop the constraint associated with this apexed triangle.

\textbf{Case 2.} If $Q'\subset \triangle_i\cap \triangle_j$, then we need to decide which constrain to drop. 
To this end, we consider the separating plane $\gamma_{i,j}$. Notice that inside the vertical extrusion of $\triangle_i\cap \triangle_j$ (and hence in $Q$), the plane $\gamma_{i,j}$ has the property that if we know its side containing the solution, then one of the constraints can be dropped. Since $\gamma_{i,j}$ does not intersect $Q$ as $\gamma_{i,j}$ belongs to the plane-set of $(\triangle_i, \triangle_j)$, we can decide which side of $\gamma_{i,j}$ contains the solution to (P2) and drop one of the constraints.
\vspace{.05in}

Regardless of the case if the plane-set of a pair $(\triangle_i, \triangle_j)$ does not intersect $Q$, then we can drop one of its constraints. Since at least $(r-1)|\Gamma|/r$ plane-sets do not intersect~$Q$, we can drop at least $(r-1)|\Gamma|/r$ constraints.
Because $|\Gamma| \geq m/2$ as each plane-set contains at least one plane, by choosing $r = 2$, we are able to drop at least $|\Gamma|/2 \geq m/4$ constraints.
Consequently, after $O(m)$ time, we are able to drop $m/4$ apexed triangles.
By repeating this process recursively, we end up with a constant size problem in which we can compute the upper envelope of the functions explicitly and find the solution to (P2) using exhaustive search. 
Thus, the running time of this algorithm is bounded by the recurrence $T(m) = T(3m/4) + O(m)$ which solves to $O(m)$. 
Because $m = O(n)$, we can find the solution to (P2) in $O(n)$ time.

The last detail is the implementation of the side-decision algorithm. 
Given a plane~$\gamma$, we want to decide on which side lies the solution to (P2).
To this end, we solve (P2) restricted to~$\gamma$, i.e., with the additional constraint of $(x,r)\in \gamma$. 
This approach was used by Megiddo~\cite{megiddo1989ball}, the idea is to recurse by reducing the dimension of the problem.
Another approach is to use a slight modification of the chord-oracle described by Pollack et al.~\cite[Section~3]{pollackComputingCenter}. 

Once the solution to (P2) restricted to $\gamma$ is known, we can follow the same idea used by Megiddo~\cite{megiddo1989ball} to find the side of $\gamma$ containing the global solution to (P2). 
Intuitively, we find the apex functions that define the minimum restricted to $\gamma$. 
Since $\phi(x) = \F{P}{x}$ is locally defined by this functions, 
we can decide on which side the minimum lies using convexity.
We obtain the following result.

\begin{lemma}
Let $\reg$ be a convex trapezoid contained in $P$ such that $\reg$ contains the geodesic center of $P$. 
Given the set of all apexed triangles of $\tau$ that intersect $\reg$, 
we can compute the geodesic center of $P$ in $O(n)$ time.
\end{lemma}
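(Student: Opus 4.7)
The plan is to follow the prune-and-search scheme sketched above, but I will make the structure more explicit. First, I would reformulate the task as the three-dimensional optimization problem (P2): minimize $r$ subject to $x\in \reg$ and $h_i(x,r)\leq 0$ for $1\leq i\leq m$, where $m=O(n)$ by Corollary~\ref{lemma:Size of tau}. By Lemma~\ref{lemma:Optimization problem same as geodesic center}, the minimizer of the upper envelope $\phi$ restricted to $\reg$ is exactly $\cp$, so it suffices to solve (P2). The starting observation is that although each $h_i$ is quadratic, for any two indices $i\neq j$ the quadratic terms cancel, so the equation $h_i=h_j$ defines the separating plane $\gamma_{i,j}$ in $\mathbb{R}^3$; knowing which side of $\gamma_{i,j}$ contains the optimum makes one of the two constraints redundant whenever the optimum projects into $\triangle_i\cap\triangle_j$.

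Next, I would pair the $m$ apexed triangles arbitrarily into $m/2$ pairs, and to each pair $(\triangle_i,\triangle_j)$ associate a plane-set of size at most $5$: the vertical extrusions of the (at most two) chord-edges of each of $\triangle_i$ and $\triangle_j$, together with the separating plane $\gamma_{i,j}$. Let $\Gamma$ be the union of these plane-sets; then $|\Gamma|=O(m)$. I would compute a $1/r$-cutting of $\Gamma$ in $\mathbb{R}^3$ in $O(m)$ time using the algorithm of~\cite{matousekCuttings}, and then use a side-decision oracle to locate the constant-size cell $Q$ of the cutting containing the optimum.

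Once $Q$ is identified, I claim that for at least a constant fraction of the pairs the entire plane-set misses $Q$, and for any such pair one constraint can be discarded. Indeed, if the plane-set of $(\triangle_i,\triangle_j)$ does not hit $Q$, then the projection $Q'$ of $Q$ to $Z_0$ crosses neither $\partial\triangle_i$ nor $\partial\triangle_j$, so either one of the triangles is disjoint from $Q'$ (and its constraint is vacuous inside $Q$), or $Q'\subseteq \triangle_i\cap\triangle_j$, in which case $\gamma_{i,j}$ is defined over all of $Q$ and lies entirely on one side of it, letting us drop one of the two constraints. Choosing $r=2$ guarantees that at least $|\Gamma|/2\geq m/4$ constraints vanish, so after one pruning round we are left with at most $3m/4$ apexed triangles and can recurse; the recurrence $T(m)=T(3m/4)+O(m)$ solves to $O(m)=O(n)$. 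When the problem size drops to $O(1)$, I solve it by brute force.

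The main obstacle is the side-decision oracle: given a plane $\gamma\subset\mathbb{R}^3$, decide in $O(m)$ time on which side of $\gamma$ the minimum of (P2) lies. My plan is to solve (P2) restricted to $\gamma$ first, reducing the dimension by one, and then apply the argument of Megiddo~\cite{megiddo1989ball}: having found the constrained minimizer and the apex functions active there, convexity of $\phi$ restricted to $\reg$ (inherited from the fact that each $g_\triangle$ is a translated Euclidean distance to $a_\triangle$) pins down the descent direction and hence the correct side of $\gamma$. The two-dimensional restricted problem can itself be solved in linear time either by a further prune-and-search in the plane or, more directly, by a chord-oracle in the style of Pollack et al.~\cite[Section~3]{pollackComputingCenter} adapted to our apex functions, using the fact that $\phi$ is convex along $\gamma\cap(\reg\times\mathbb{R})$. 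Combining these pieces yields the claimed $O(n)$ bound.
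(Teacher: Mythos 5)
Your proposal follows the paper's own argument essentially step for step: the (P2) reformulation, the cancellation of quadratic terms yielding the separating planes $\gamma_{i,j}$, the pairing into plane-sets including the vertical extrusions of chord edges, the $1/r$-cutting in $\mathbb{R}^3$, the case analysis on whether $Q'$ misses a triangle or lies in $\triangle_i\cap\triangle_j$, the recurrence $T(m)=T(3m/4)+O(m)$, and the dimension-reducing side-decision oracle à la Megiddo with the convexity of $\phi$ on $\reg$ used to pick the descent side. The approach and the structure of the argument match the paper's, so there is nothing substantively different to compare.
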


The following theorem summarizes the result presented in this paper.

\begin{theorem}
We can compute the geodesic center of any simple polygon $P$ of $n$ vertices in $O(n)$ time.
\end{theorem}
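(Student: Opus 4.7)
The plan is to assemble the theorem by chaining together the linear-time subroutines developed in the preceding sections, verifying that each stage runs in $O(n)$ time and that together they locate $\cp$ exactly.

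First, I would invoke the matrix-search algorithm of Hershberger and Suri to compute, in $O(n)$ time, the farthest neighbor $\ff{v}$ of every vertex $v$ of $P$. With these in hand, Section~\ref{Section:Decomposing the boundary} partitions $\partial P$ in $O(n)$ additional time into the marked-vertex chains and the list $E$ of transition edges. Next, using Lemma~\ref{lemma: Hourglass partition}, I construct the transition hourglass $H_{ab}$ for every $ab\in E$ in $O(n)$ total time; the fact that this is achievable despite the hourglasses covering overlapping sub-regions is precisely what Lemmas~\ref{lemma:Split paths}, \ref{lemma:Suri's lemma}, and~\ref{lemma:Bounding complexity of hourglasses} purchase.

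Second, I would build the set $\tau$ of apexed triangles: apply Lemma~\ref{lemma:Triangles inside hourglasses} inside each transition hourglass and Lemma~\ref{lemma:Triangles inside funnels} inside the funnel $\fn{v}{C_v}$ of each marked vertex. By Corollary~\ref{lemma:Size of tau}, $|\tau|=O(n)$ and the construction runs in $O(n)$ time. Lemma~\ref{lemma:Optimization problem same as geodesic center} then guarantees that the upper envelope $\phi(x)$ of the apex functions of $\tau$ coincides with $\F{P}{x}$ on all of $P$, so minimizing $\phi$ is equivalent to locating $\cp$.

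Third, I would run the prune-and-search algorithm of Section~\ref{section:Prune and search}: starting with $R=P$ and $\tau_R=\tau$, repeatedly build a $(1/12)$-net of the chords bounding the apexed triangles that intersect the current \tcell, refine into a constant-size \tcell decomposition, and use $O(1)$ invocations of the Pollack--Sharir--Rote chord oracle (applicable by Lemma~\ref{lemma:Optimization problem same as geodesic center}) to identify the sub-\tcell $R'$ containing $\cp$. Each iteration runs in time proportional to $\m=\max\{|R|,|\tau_R|\}$, and halves $\m$, yielding overall linear cost. Lemma~\ref{lemma:Finding the convex trapezoid} ends this phase either by reporting $\cp$ outright or by producing a triangle containing it.

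Finally, if the search terminates with a triangle $\reg$ containing $\cp$, the last lemma of Section~\ref{Section:Solving convex optimization poblem} finds $\cp$ in $O(n)$ additional time by solving the convex program (P2) via the $1/r$-cutting plus side-decision oracle based on Megiddo's technique. Summing all phases gives $O(n)$, proving the theorem. The main conceptual obstacle in this composition is ensuring that the chord-oracle and the convex-program reduction remain valid after each pruning step; this is precisely where Lemma~\ref{lemma:Optimization problem same as geodesic center} is essential, since it lets us ignore $P$ itself and work only with the $O(n)$ apex functions.
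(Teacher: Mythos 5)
Your proposal is correct and follows the same structure as the paper: it assembles exactly the chain of linear-time subroutines (farthest-neighbor computation via matrix search, boundary decomposition, hourglass construction, apexed-triangle cover, $\varepsilon$-net prune-and-search with the chord oracle, and the final convex program via cuttings) that the paper itself uses, and correctly identifies Lemma~\ref{lemma:Optimization problem same as geodesic center} as the glue that makes the oracle and the optimization step valid on the pruned triangle set. Nothing essential is missing or misattributed.
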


\bibliographystyle{abbrv}
\bibliography{Geodesic}

\end{document}